\documentclass[preprint,10pt,final]{elsarticle}
\usepackage{etex}
\usepackage{amssymb,amsbsy,exscale,amsmath,amsfonts,amscd,amsthm}
\usepackage[letterpaper]{geometry}
\usepackage[usenames,dvipsnames]{color}
\usepackage{bm}
\usepackage{soul}
\usepackage[ampersand]{easylist}
\usepackage{graphicx}
\graphicspath{{./images/}}
\usepackage{parskip}
\usepackage{longtable}
\usepackage{subcaption}
\usepackage[usenames,dvipsnames]{color}
\usepackage{multirow}
\usepackage{algorithmicx}
\usepackage[ruled]{algorithm}
\usepackage{algpseudocode}
\biboptions{sort&compress}
\usepackage{pifont}
\usepackage{tikz}
\usepackage{epstopdf}
\usepackage[nottoc]{tocbibind}
\usetikzlibrary{mindmap,backgrounds}
\usepackage{stmaryrd}
\usepackage{mathrsfs}
\usepackage[format=hang,font=footnotesize]{caption}
\usepackage[format=hang,font=footnotesize]{subcaption}
\usepackage{epigraph}
\usepackage{array}
\usepackage{booktabs,makecell}
\usepackage{parskip}
\usepackage{mathtools}
\usepackage[mathscr]{euscript}
\usepackage{upgreek}
\usepackage[overload]{empheq}
\usepackage{xspace}
\usepackage{arydshln}
\usepackage{cmap}
\usepackage[T1]{fontenc}
\usepackage{url}

\pagestyle{plain}

\newcommand{\lb}{\left[}
\newcommand{\rb}{\right]}
\newcommand{\lp}{\left(}
\newcommand{\rp}{\right)}

\DeclareCaptionLabelFormat{andtable}{#1~#2 \& \tablename~\thetable}

\newcommand{\mbb}[1]{\mathbb{#1}}

\newcommand{\mcal}[1]{\mathcal{#1}}

\newcommand{\abs}[1]{\left\lvert{#1}\right\rvert}
\providecommand{\norm}[1]{\lVert#1\rVert}

\newcommand{\curl}{\textrm{curl}}
\newcommand{\dive}{\textrm{div}}

\newcommand{\mcT}[1]{\mathcal{T}_h^{#1}}
\newcommand{\bx}{{\bf x}}

\newcommand{\bn}{{\bf n}}

\newcommand{\bt}{{\bf t}}
\newcommand{\bN}{{\bf N}}

\newcommand{\bE}{{\bf E}}
\newcommand{\bH}{{\bf H}}
\newcommand{\bV}{{\bf V}}
\newcommand{\bJ}{{\bf J}}

\newcommand{\bS}{{\bf S}}

\newcommand{\bv}{{\bf v}}

\newcommand{\bu}{{\bf u}}

\newcommand{\bQ}{{\bf Q}}
\newcommand{\mm}{$\upmu$m\xspace}

\newcommand{\pow}{\varsigma}
\newcommand{\fie}{\pi}
\newcommand{\ext}{\sigma_{ext}}

\makeatletter
\newcommand*\rel@kern[1]{\kern#1\dimexpr\macc@kerna}
\newcommand*\widebar[1]{%
  \begingroup
  \def\mathaccent##1##2{%
    \rel@kern{0.8}%
    \overline{\rel@kern{-0.8}\macc@nucleus\rel@kern{0.2}}%
    \rel@kern{-0.2}%
  }%
  \macc@depth\@ne
  \let\math@bgroup\@empty \let\math@egroup\macc@set@skewchar
  \mathsurround\z@ \frozen@everymath{\mathgroup\macc@group\relax}%
  \macc@set@skewchar\relax
  \let\mathaccentV\macc@nested@a
  \macc@nested@a\relax111{#1}%
  \endgroup
}
\makeatother

\newcommand{\COMSOL}{\textsc{Comsol }{Multiphysics}\xspace}
\newtheorem{theorem}{Proposition}

 \def\etal{{\it et al.}\xspace}

\usepackage[most]{tcolorbox}
\newtcolorbox{rev1}{colback=pink,colframe=white,coltext=black,breakable}
\newtcolorbox{rev2}{colback=lightgray,colframe=white,coltext=black,breakable}

\journal{Journal of Computational Physics}

\begin{document}

\begin{frontmatter}


\title{A hybridizable discontinuous Galerkin method for computing nonlocal electromagnetic effects in three-dimensional metallic nanostructures}

\author[mit]{F.~Vidal-Codina\corref{cor1}}
\ead{fvidal@mit.edu}
\author[mit]{N.~C.~Nguyen}
\ead{cuongng@mit.edu}
\author[umn]{S.-H.~Oh}
\ead{sang@umn.edu}
\author[mit]{J.~Peraire}
\ead{peraire@mit.edu}

\cortext[cor1]{Corresponding author}
 \address[mit]{Department of Aeronautics and Astronautics, Massachusetts Institute of Technology, Cambridge, MA 02139, USA}
\address[umn]{Department of Electrical and Computer Engineering, University of Minnesota, Minneapolis, MN 55455, USA}

 \begin{abstract}
The interaction of light with metallic nanostructures produces a collective excitation of electrons at the metal surface, also known as surface plasmons. These collective excitations lead to resonances that enable the confinement of light in deep-subwavelength regions, thereby leading to large near-field enhancements. The simulation of plasmon resonances presents notable challenges. From the modeling perspective, the realistic behavior of conduction-band electrons in metallic nanostructures is not captured by Maxwell's equations, thus requiring additional modeling. From the simulation perspective, the disparity in length scales stemming from the extreme field localization demands efficient and  accurate numerical methods.

In this paper, we develop the hybridizable discontinuous Galerkin (HDG) method to solve Maxwell's equations augmented with the hydrodynamic model for the conduction-band electrons in noble metals. This method enables the efficient simulation of plasmonic nanostructures while accounting for the nonlocal interactions between electrons and the incident light. We introduce a novel postprocessing scheme to recover superconvergent solutions and demonstrate the convergence of the proposed HDG method for the simulation of a 2D gold nanowire and a 3D periodic annular nanogap structure. The results of the hydrodynamic model are compared to those of a simplified local response model, showing that differences between them can be significant at the nanoscale.

 \end{abstract}
\begin{keyword}
Hybridizable discontinuous Galerkin method \sep Maxwell's equations \sep hydrodynamic model for metals \sep plasmonics \sep nonlocal electrodynamics \sep terahertz nonlocality
  \end{keyword}
  
\end{frontmatter}

\section{Introduction}

The field of plasmonics \cite{maier2007plasmonics,ozbay2006plasmonics} studies the collective excitation of conduction-band electrons in metallic nanostructures. These excitations, or plasmon resonances, enable the confinement of light in lengths several orders of magnitude smaller than the wavelength of light, leading to enormous near-field enhancements of the incident wave. The excitation of plasmons is magnified near the corners or sharp features of metallic nanoparticles, or within gaps formed by metallic structures at the nanoscale. Moreover, the extreme confinement and enhancement properties provide unparalleled means for the manipulation of light and its interaction with metals, at scales well beyond the diffraction limit. As a result, the field of plasmonics has motivated applications for sensing \cite{vspavckova2016optical}, energy harvesting \cite{brongersma2016plasmonic},  near-field scanning microscopy \cite{novotny2011antennas}, plasmonic waveguiding and lasing \cite{smalley2016amplification}.

Plasmonic phenomena are governed by the propagation of electromagnetic waves. These waves propagate through dielectric as well as metallic media, and several models have been proposed to characterize the behavior of metals. The most common approach to simulate plasmonic structures is to solve Maxwell's equations in both the metal and the dielectric, and account for the losses in the metal through a complex permittivity in the metal given by Drude's model \cite{drude1900elektronentheorie}. The effect of the complex permittivity in the metal is to quickly dampen the electromagnetic wave away from the interface. This approach assumes the electrons in the valence band are fully detached from the ions, thus only accounting for electron-electron and electron-ion collisions. The Drude model has limitations due to simplifications in the description of the electron motion that appear at nanometer scales, where nonlocal interaction effects between electrons become predominant \cite{fitzgerald2016quantum,zhu2016quantum,garcia2008nonlocal}. To account for these long-range interactions, the mathematical model must be enhanced. In this work, we consider the hydrodynamic model (HM) for noble metals, first introduced in the 1970s \cite{eguiluz1975influence}, which models the inter-electron coupling by including a hydrodynamic pressure term. The resulting model is solved simultaneously with Maxwell's equations. For noble metal structures with nanometric and subnanometric features, the HM predicts lower field enhancements and resonance blue-shifts, which are in better agreement with experimental data than the results computed with the Drude model \cite{toscano2012modified,raza2015review}. 

The ability to accurately model and simulate electromagnetic wave propagation problems for plasmonic applications requires capabilities that challenge traditional simulation techniques. The problems of interest involve the interaction of long-wavelength electromagnetic waves (\mm and mm) with nanometric cavities for potential applications in sensing and spectroscopy. Additionally, plasmonic phenomena are characterized by the extreme confinement and tight localization of fields in nanometer-wide apertures, nanoparticles, nanometric sharp tips, and even atomically thick materials. As a consequence, the discretizations required to attain accurate simulations need to be adaptive (to concentrate the degrees of freedom in the regions of interest) and anisotropic (to properly capture boundary-layer type structures that appear at the interface of metallic nanostructures).

The first and most widely used method for computational electromagnetics is the finite-difference time-domain (FDTD) algorithm \cite{taflove2005computational,kunz1993finite}, which discretizes both space and time using Yee's scheme \cite{yee1966numerical}. The main advantage of Yee's scheme is its simplicity and efficiency, due to the use of staggered Cartesian grids and second-order schemes for both space and time. The main limitation of FDTD is their extension to complex geometries with complex features, since Cartesian grids can only approximate these irregular boundaries in a stair-cased manner. The FDTD method has recently been applied to the hydrodynamic model for the simulation of 2D nanoparticles \cite{mcmahon2010calculating}.

Finite-volume time-domain (FVTD) methods have also been devised to solve Maxwell's equations, leveraging high-order Godunov schemes to deal with the hyperbolicity of the system \cite{munz2000divergence,ismagilov2015second}. The use of high-order Godunov schemes on a single control volume is appealing, as it renders methods that are amenable to mesh refinement and adaptation, in addition to being low dissipative and dispersive. More recently, there has been an effort to fuse these high-order Godunov schemes from FVTD with the staggering techniques from FDTD, resulting in a new generation of FVTD methods \cite{balsar2017computational,balsar2018computational} that are constraint-preserving, high-order accurate, A-stable, and that accommodate significant variations of material properties at media interfaces.

Finite element (FE) methods \cite{jin2015finite} are popular techniques for wave propagation problems, thanks to their ability to handle heterogeneous media and complex geometries with the use of unstructured grids. The class of face/edge elements introduced by N\'{e}d\'{e}lec \cite{nedelec1980mixed} have been extensively used to simulate electromagnetic wave propagation, and have been shown to avoid the problem of spurious modes \cite{bossavit1990solving} by appropriately choosing the approximation spaces. 
A commonly used implementation of edge elements for Maxwell's equations is the one provided by the RF Module of \COMSOL \cite{comsol}, which has been extended to include the hydrodynamic model \cite{toscano2012modified,ciraci2012probing}. Additionally, a frequency-domain implementation of the hydrodynamic model based on edge elements has been applied to the numerical simulation of 2D grooves and nanowires \cite{hiremath2012numerical}. 

An attractive alternative to edge elements is the class of discontinuous Galerkin (DG) methods \cite{ bassi1997high,cockburn1998local}. These methods approximate each component of the vector solution independently using standard finite element spaces within each discretization element. The solution across elements is discontinuous, and continuity of the flux is enforced weakly across element interfaces. The DG method with explicit time integration was applied to solve the time-domain Maxwell's equations \cite{hesthaven2002nodal}, and has been further developed to simulate wave propagation phenomena through metamaterials at the nanoscale \cite{busch2011discontinuous}, as well as for dispersive media \cite{lu2004discontinuous,ji2007high,lanteri2013convergence} and more recently for 2D dimers using the hydrodynamic model \cite{schmitt2016dgtd}. DG methods face disadvantages when used for practical 3D applications in the frequency domain or in the time domain with implicit time integration, due to the computational burden that arises from nodal duplication at the interfaces. This shortcoming motivated the development of the hybridizable discontinuous Galerkin (HDG) method, first introduced in \cite{cockburn2009unified} for elliptic problems, subsequently analyzed in \cite{cockburn2008superconvergent,cockburn2009superconvergent}, and later extended to a wide variety of partial differential equations (PDEs) \cite{nguyen2009linearCD,nguyen2011ns}. More specifically, the HDG has proven very effective for acoustics and elastodynamics \cite{nguyen2011acoustic,saa2012binary} as well as time-harmonic Maxwell's equations in two dimensions \cite{nguyen2011maxwell} and three dimensions \cite{li2013hybridizable}. An additional attractive feature of the HDG method is that, unlike other DG methods, it has optimal convergence rates for both the solution and the flux. As a consequence, its flux superconvergence properties can be exploited to devise a local postprocess that increases the convergence rate of the approximate solution by one order.

The main contribution of this paper is a high-order numerical scheme, the HDG method, to simulate the interaction of light with metallic nanostructures by solving the frequency-domain Maxwell's equations coupled with the hydrodynamic model for the conduction-band electrons of noble metals. There are several features of the HDG method that make it particularly attractive for computational electromagnetics: (1) it can be used on general unstructured meshes, thus allowing complex geometries and facilitating the use of adaptive discretizations; (2) it is high-order accurate, meaning it exhibits low dissipation and dispersion and is therefore well suited for wave propagation problems; (3) the linear system that needs to be solved comprises only a reduced number of degrees of freedom, defined on the faces of the discretization cells; (4) the treatment of boundary conditions is naturally incorporated in the weak formulation; (5) it does not require special approximation spaces such as curl-conforming subspaces; and (6) it can easily accommodate material contrasts at the interfaces of several orders of magnitude.

This article is organized as follows. In Section \ref{sec:problem}, we introduce the equations and notation used throughout the paper. In Section \ref{sec:hdg}, we introduce the HDG method to solve the hydrodynamic model for metals in frequency domain, and discuss the implementation  and postprocessing strategies. In Section \ref{sec:res}, we present numerical results to assess the performance of the HDG method. We finalize in Section \ref{sec:conc} by providing some concluding remarks.

\section{Modeling optical response in metallic nanostructures}\label{sec:problem}

\subsection{Maxwell's equations in a metal}
The electric $\widebar{\mcal{E}}(\widebar{\bx},\widebar{t})$ and magnetic $\widebar{\mcal{H}}(\widebar{\bx},\widebar{t})$ fields, along with the electric displacement $\widebar{\mcal{D}}$ and magnetic flux density $\widebar{\mcal{B}}$, satisfy Maxwell's equations in a metallic domain $\overline{\Omega}$
\begin{equation}\label{eq:maxwell_s}
 \begin{aligned}
\nabla \times \widebar{\mcal{E}} + \partial_{\widebar{t}}\widebar{\mcal{B}}&= 0 \quad \mbox{(Amp\`{e}re's law)}, \\
\nabla \times \widebar{\mcal{H}} -\partial_{\widebar{t}} \widebar{\mcal{D}}&=  \widebar{\mcal{J}}_{\textnormal{im}} \quad \mbox{(Faraday's law)},\\
\nabla\cdot \widebar{\mcal{D}}&= \widebar{\rho}_{\textnormal{im}},\quad \mbox{(Gauss's law)},\\
\nabla \cdot \widebar{\mcal{B}}& = 0, \quad \mbox{(magnetic Gauss's law)},
\end{aligned}
\end{equation}
where $\widebar{\mcal{J}}_{\textnormal{im}} $ represents the impressed electric current and $\widebar{\rho}_{\textnormal{im}} $ the impressed volume charge density. In addition, we have the following constitutive relations
\begin{equation}\label{eq:constitutive_s}
 \begin{aligned}
\widebar{\mcal{B}} &= \widebar{\mu} \widebar{\mcal{H}}\,,\\
\widebar{\mcal{D}} &= \varepsilon_0\widebar{\mcal{E}} + \widebar{\mcal{P}}+ \widebar{\mcal{P}}_\infty = \widebar{\varepsilon}_\infty\widebar{\mcal{E}} + \widebar{\mcal{P}}\,, \\
\nabla \cdot \widebar{\mcal{P}} &= -\widebar{\rho}\,\\
\widebar{\mcal{J}} &= \partial_{\widebar{t}}\widebar{\mcal{P}}\,.
\end{aligned}
\end{equation}
The polarization density $\widebar{\mcal{P}}$ represents the density of permanent or induced electric dipole moments due to free electrons. Conversely, the background polarization $\widebar{\mcal{P}}_\infty = (\widebar{\varepsilon}_\infty-\varepsilon_0)\widebar{\mcal{E}}$ represents the polarization of the bound electrons in the valence band. The last two relations relate the polarization density $\widebar{\mcal{P}}$ to the internal current $\widebar{\mcal{J}}$ and internal charge density $\widebar{\rho}$. The total charge density $\widebar{\rho}_{\textnormal{tot}}$ and total electric current $\widebar{\mcal{J}}_{\textnormal{tot}}$ are the summation of both the impressed and the internal contributions. In this paper, we assume there are no impressed currents and charges, hence $\widebar{\rho}_{\textnormal{tot}} =\widebar{\rho} $ and $\widebar{\mcal{J}}_{\textnormal{tot}} = \widebar{\mcal{J}}$.

\subsection{Hydrodynamic model}
A hydrodynamic model for the free electron gas was introduced in the 1970s \cite{eguiluz1975influence}. This model, despite neglecting quantum phenomena such as quantum tunneling and quantum oscillations, introduces a hydrodynamic pressure term that accounts for the nonlocal coupling of the conduction-band electrons that becomes relevant in nanometric regimes. Hence, it is referred to as nonlocal model or hydrodynamic model (HM) for noble metals.

We provide a brief derivation of the hydrodynamic model, and point the reader to \cite{eguiluz1975influence,boardman1982electromagnetic,pitarke2006theory} for a more thorough derivation. We introduce the electron density $n(\widebar{\bx},\widebar{t})$, the electron pressure $p(\widebar{\bx},\widebar{t})$ and the hydrodynamic velocity ${\bf v}(\widebar{\bx},\widebar{t})$, which are related by the continuity equation as $\partial_{\widebar{t}} n= -\nabla\cdot (n\bv)$. The equation of motion for the electron fluid under a macroscopic electromagnetic field is described as
\begin{equation}\label{eq:nonlocal0}
 m_e(\partial_{\widebar{t}} +\bv\cdot \nabla + \widebar{\gamma})\bv = -e(\widebar{\mcal{E}} + \bv\times\widebar{\mcal{H}}) - \frac{\nabla p}{n}\;,
\end{equation}
where $m_e$ is the effective electron mass, $e$ is the electron charge and $\widebar{\gamma}$ is a damping constant related to the collision rate of the electrons. In order to simplify the above equation, we linearize the electron density field around the constant equilibrium density of the electron gas $n_0$, such that $n(\widebar{\bx},\widebar{t}) \approx n_0 + n_1(\widebar{\bx},\widebar{t})$; neglect the high order term for the derivative of the hydrodynamic velocity $\bv\cdot\nabla\bv$; and also neglect the effect of the magnetic field, since the electron fluid is driven mainly by the electric field. In addition, we simplify the pressure term in \eqref{eq:nonlocal0} assuming a Thomas-Fermi model where only the kinetic energy is relevant, that is 
\begin{equation*}
 \frac{\nabla p}{n} \approx m_e \widebar{\beta}^2 \frac{\nabla n_1}{n_0}\;.
\end{equation*}
The quantum parameter $\widebar{\beta}$, which represents the nonlocality, is usually expressed \cite{boardman1982electromagnetic} in terms of the Fermi velocity $\widebar{v}_F$ as $\widebar{\beta} = \sqrt{3/5}\widebar{v}_F$. Using the assumptions above, the equation of motion for the electron fluid can be simplified as
\begin{equation*}
 m_e(\partial_{\widebar{t}} +\widebar{\gamma})\bv = -e\widebar{\mcal{E}} - m_e\widebar{\beta}^2 \frac{\nabla n_1}{n_0}\;,
\end{equation*}
and if we differentiate with respect to time, we arrive at 
\begin{equation}\label{eq:nonlocal1}
 m_e(\partial_{\widebar{t}\widebar{t}} +\widebar{\gamma}\partial_{\widebar{t}})\bv = -e\partial_{\widebar{t}}\widebar{\mcal{E}} + m_e\widebar{\beta}^2 \nabla(\nabla\cdot \bv)\;,
\end{equation}
where the last term is obtained by linearizing the continuity equation $\partial_{\widebar{t}} n_1= -\nabla\cdot (n_0\bv)$ and neglecting the high-order term $\nabla\cdot(n_1\bv)$. Using the relation between the electric current and the electron gas density $\widebar{\mcal{J}} = -en\bv$, and multiplying \eqref{eq:nonlocal1} by $-en/m_e$, we obtain
\begin{equation}\label{eq:hm_time}
\partial_{\widebar{t}\widebar{t}}\widebar{\mcal{J}} +\widebar{\gamma}\partial_{\widebar{t}}\widebar{\mcal{J}} = \frac{e^2n}{m_e\varepsilon_0}\varepsilon_0\partial_{\widebar{t}}\widebar{\mcal{E}}+ \widebar{\beta}^2 \nabla(\nabla\cdot \widebar{\mcal{J}})\;.
\end{equation}
This equation, which prescribes a nonlocal relationship between the electric field and the electric current, needs to be solved simultaneously with Maxwell's equations \eqref{eq:maxwell_s} inside the metal. The parameter involved in the third term is the square of the metal's plasma frequency $\widebar{\omega}_p$, defined as $\widebar{\omega}_p = e\sqrt{n/(m_e\varepsilon_0)}$. The plasma frequency represents the frequency above which the conduction electrons are not able to oscillate in phase with the incident light, thus effectively impeding the cancellation the incoming wave. That is, for frequencies larger than the plasma frequency the incident wave is allowed to propagate through the metal, although with losses.

It is convenient to nondimensionalize Maxwell's equations. We use the following scalings for the electromagnetic fields
 \begin{align*}
\widebar{\bx} &= \bx /L_c,\quad \widebar{t} = t c_0/L_c,\quad \widebar{\mcal{E}} = \alpha Z_0 \mcal{E}, \quad  \widebar{\mcal{H}} = \alpha \mcal{H},\\
\widebar{\mcal{D}} =& \varepsilon_0 \alpha Z_0 \mcal{D},\quad \widebar{\mcal{B}} = \mu_0 \alpha Z_0 \mcal{B},\quad \widebar{\mcal{J}} =  \alpha \mcal{J}/L_c, 
\end{align*}
where $L_c$ is a reference length scale, $\alpha$ is a reference magnetic field and $\varepsilon_0,\,\mu_0$ are the free-space permittivity and permeability, that relate to the free-space speed of light $c_0 = 1/\sqrt{\varepsilon_0\mu_0}$ and free-space impedance $Z_0 = \sqrt{\mu_0/\varepsilon_0}$. For a non-magnetic medium $(\widebar{\mu} = \mu_0)$, applying the scalings above to Maxwell's equations \eqref{eq:maxwell_s}, the constitutive relations \eqref{eq:constitutive_s} and the hydrodynamic pressure equation \eqref{eq:hm_time}, we obtain 
 \begin{align*}
\nabla \times \mcal{E} + \partial_t\mcal{H}&= 0 ,\\
\nabla \times \mcal{H} -\partial_t \varepsilon_\infty\mcal{E}&= \mcal{J} , \\
 {\beta}^2 \nabla(\nabla\cdot {\mcal{J}}) - \partial_{{t}{t}}{\mcal{J}} -{\gamma}\partial_{{t}}{\mcal{J}} &=- \omega_p^2\partial_{{t}}{\mcal{E}},\\
\nabla\cdot \lp\varepsilon_\infty \mcal{E}\rp&= \rho ,\\
\nabla \cdot \mcal{H}& = 0 ,
\end{align*}
with the nondimensional variables $\varepsilon_\infty = \widebar{\varepsilon}_\infty/\varepsilon_0$, $\omega_p =\widebar{\omega}_pL_c/c_0$, $\gamma=\widebar{\gamma}L_c/c_0$ and $\beta = \widebar{\beta}/c_0$. 

Using the linearity of Maxwell's equations we can write, for a given angular frequency $\omega$, the components of, for instance, the electric field as $\mcal{E}(\bx,t) = \Re\lbrace \bE(\bx) \exp(-i\omega t)\rbrace$. Consequently, the time-domain equations are recast into the frequency domain through the transformation $\partial_t \mapsto -i\omega$. Hence, the frequency-domain Maxwell's equations with the hydrodynamic model for metals are given by
\begin{equation}\label{eq:maxwellhydro}
 \begin{aligned}
\nabla \times \bE -i\omega\bH &= 0 ,\\
\nabla \times \bH + i\omega\varepsilon_\infty \bE&= \bJ , \\
\beta^2 \nabla(\nabla\cdot \bJ) + \omega(\omega + i\gamma)\bJ &= i\omega\omega_p^2\bE,\\
\nabla\cdot \lp \varepsilon_\infty \bE \rp&= \uprho  ,\\
\nabla \cdot \bH & = 0  .
\end{aligned}
\end{equation}
The system above is completed with boundary conditions
\begin{align*}
\label{eq:maxwellhydroV_BC}
&\bn\times\bE\times\bn  = \bE_\partial, \quad  \mbox{on }\partial\overline{\Omega}_E,\\
&\bn\times\bH = \bn\times\bH_\partial,\quad  \mbox{on }\partial\overline{\Omega}_V,\\
&\bn\cdot\bJ  = 0, \quad \mbox{on }\partial\overline{\Omega}.
\end{align*}
where $\partial \overline{\Omega} = \partial\overline{\Omega}_E \cup \partial\overline{\Omega}_V $. The last boundary condition \cite{boardman1976surface} prescribes a vanishing normal electric current at the interface. Physically, it simulates a no electron spill-out condition, that is the electrons are precluded from leaving the metal. Effects such as electron tunneling, a quantum phenomenon that becomes relevant in subnanometric regimes, are therefore not included in the HM.

The more simplistic Drude model, also known local response approximation (LRA), may be obtained from \eqref{eq:maxwellhydro} by setting $\beta = 0$, which recovers a local relation between the electric field and internal electric current $\bJ = \frac{i\omega\omega_p^2}{\omega(\omega+i\gamma)}\bE$ (Ohm's law). The complex Drude permittivity can therefore be written as $\varepsilon(\omega) = \varepsilon_\infty - \omega_p^2/(\omega(\omega + i\gamma))$. The Drude model for metals is attractive for its simplicity, and produces acceptable results for many electromagnetic applications. Nonetheless, the assumption that all electrons exhibit a local behavior produces unphysical results for frequencies close to the plasma frequency \cite{ruppin2001extinction} and for geometries and features below ten nanometers \cite{romero2006plasmons,zhu2016quantum}. In these regimes the HM is able to capture more accurate electromagnetic responses than the LRA.

Another difference between the LRA and the HM is the distribution of the internal charge density $\uprho$, defined as $i\omega\uprho = \nabla\cdot \bJ$. The solutions provided by the local model infinitely squash $\uprho$ at the metal surface, which results in a Dirac delta at the metal-dielectric interface. That is, the metal acts as a hard wall for the incoming EM wave, and impedes propagation through it. Conversely, the electron pressure term in the hydrodynamic model regularizes the induced charge density by smoothing its profile, thus allowing the penetration of the incident field. The spreading distance experienced by the charge density is on the order of the length $\updelta= \beta/\omega_p$, introduced in \cite{ciraci2013hydrodynamic}. 


\section{HDG method for the hydrodynamic model}\label{sec:hdg}

\subsection{Approximation spaces}

We first review the basic notation, operators and approximation spaces needed for the HDG method for Maxwell's equations in 3D, following \cite{nguyen2011maxwell}. We denote by $\mcT{}$ a triangulation of disjoint regular elements $T$ that partition an open domain $\mcal{D}\in\mbb{R}^3$. The set of element boundaries is then defined as $\partial \mcT{}:=\lbrace \partial T:\,T\in\mcT{} \rbrace$. For an arbitrary element $T\in\mcT{}$, $F = \partial T\cap \partial \mcal{D}$ is a boundary face if it has a nonzero 2D Lebesgue measure. Any pair of elements $T^+$ and $T^-$ share an interior face $F = \partial T^+ \cap \partial T^-$ if its 2D Lebesgue measure is nonzero. We finally denote by $\mcal{E}_h^o$ and $\mcal{E}_h^\partial$ the set of interior and boundary faces respectively, and the total set of faces $\mcal{E}_h = \mcal{E}_h^o\cup \mcal{E}_h^\partial$.

Let $\bn^+$ and $\bn^-$ be the outward-pointing unit normal vectors on the neighboring elements $T^+,\,T^-$, respectively. We further use $\bu^\pm$ to denote the trace of $\bu$ on $F$ from the interior of $T^\pm$. The jump $\llbracket\cdot\rrbracket$ for an interior face $F\in\mcal{E}_h^o$ is defined as
\begin{equation*}
 \llbracket \bu \odot \bn \rrbracket = \bu^+\odot\bn^+ + \bu^-\odot\bn^-,
\end{equation*}
and is single valued for a boundary face $F\in\mcal{E}_h^\partial$ with outward normal $\bn$, that is
\begin{equation*}
 \llbracket \bu \odot \bn \rrbracket = \bu\odot\bn,
\end{equation*}
where the binary operation $\odot$ refers to either $\cdot$ or $\times$. The tangential $\bu^t$ and normal $\bu^n$ components of $\bu$, for which $\bu =\bu^t + \bu^n$, are then represented as
\begin{equation*}
 \bu^t:= \bn \times(\bu\times\bn),\qquad \bu^n := \bn(\bu\cdot\bn).
\end{equation*}

Let $\bm L^2(\mcal{D})\equiv [L^2(\mcal{D})]^3$ denote the Lebesgue space of square integrable functions with three components and $H^1(\mcal{D})$ the Hilbert space with $H^1(\mcal{D}) = \lbrace v\in L^2(\mcal{D}):\,\int_\mcal{D}\abs{\nabla v}^2 <\infty\rbrace$. We introduce the curl-conforming space
\begin{equation*}
 \bm H^{\curl}(\mcal{D}) = \lbrace \bu \in \bm L^2(\mcal{D}):\nabla\times \bu \in \bm L^2(\mcal{D}) \rbrace
\end{equation*}
with associated norm $\norm{\bu}^2_{ \bm H^{\curl}(\mcal{D})} = \int_{\mcal{D}} \abs{\bu}^2 + \abs{\nabla\times\bu}^2$, as well as the div-conforming space
\begin{equation*}
 \bm H^{\dive}(\mcal{D}) = \lbrace \bu \in \bm L^2(\mcal{D}):\nabla\cdot \bu \in L^2(\mcal{D}) \rbrace
\end{equation*}
with associated norm $\norm{\bu}^2_{ \bm H^{\dive}(\mcal{D})} = \int_\mcal{D} \abs{\bu}^2 + \abs{\nabla\cdot\bu}^2$. 

Let $\mcal{P}^p(\mcal{D})$ denote the space of complex-valued polynomials of degree at most $p$ on $\mcal{D}$. We introduce the following approximation spaces 
\begin{align*}
W_h &= \{w\in L^2(\mcal{D}) : w|_T \in \mcal{P}^{p}(T), \;\forall T\in \mathcal{T}_h\},\\
\bm W_h &= \{\bm \xi\in \bm L^2(\mcal{D}) : \bm \xi|_T \in \lb \mcal{P}^{p}(T) \rb^3,\; \forall T\in \mathcal{T}_h\},\\
M_h &= \{\mu\in  L^2(\mcal{E}_h)\,: \mu|_{F} \in \mcal{P}^{p}(F),\;\forall F\in \mcal{E}_h \},\\
\bm M_h &= \{\bm\mu\in \bm L^2(\mcal{E}_h)\,: \bm\mu|_{F} \in \mcal{P}^{p}(F)\bt_1 \oplus \mcal{P}^{p}(F)\bt_2,\;\forall F\in \mcal{E}_h \},
\end{align*}
where $\bt_1,\, \bt_2$ are linearly independent vectors tangent to the face, thus naturally including the $\bm H^{\curl}$ nature of the solutions, since by construction $\bm\mu \in \bm M_h$ satisfies $\bm\mu = \bn\times(\bm\mu\times\bn) = \mu_1 \bt_1 + \mu_2 \bt_2$. The tangent vectors can be defined in terms of $\bn = (n_1,n_2,n_3$) as ${\bf t}_1 =(-n_2/n_1,1,0)$ and ${\bf t}_2 =(-n_3/n_1,0,1)$. This definition assumes that $|n_1| \ge \max(|n_2|,|n_3|)$ but analogous expressions can be obtained when $|n_2| \ge \max(|n_1|,|n_3|)$ or $|n_3| \ge \max(|n_1|,|n_2|)$ to avoid division by a small number. Boundary conditions are included by setting $\bm M_h ({\bf u}_\partial) = \lbrace \bm \mu \in \bm M_h:\, \bn\times\bm\mu = \Pi {\bf u}_\partial\; \mbox{on } \partial \mcal{D} \rbrace$ and $ M_h ({ u}_\partial) = \lbrace  \mu \in  M_h:\, \mu = \Pi { u}_\partial\; \mbox{on } \partial \mcal{D} \rbrace$, where $\Pi {\bf u}_\partial$ (respectively, $\Pi u_\partial$) is the projection of ${\bf u}_\partial$ onto $\bm M_h$ (respectively, $u_\partial$ onto $M_h$).

Finally, we define the various Hermitian products for the above finite element spaces.  The volume inner products are defined as
\begin{equation*}
 (\eta,\zeta)_{\mcT{}} := \sum_{T\in\mcT{}}(\eta,\zeta)_T,\qquad  (\bm\eta,\bm\zeta)_{\mcT{}} := \sum_{i = 1}^3(\eta_i,\zeta_i)_{\mcT{}},
\end{equation*}
and the surface inner products by
\begin{equation*}
 \langle\eta,\zeta\rangle_{\partial\mcT{}} := \sum_{T\in\mcT{}} \langle\eta,\zeta \rangle_{\partial T},\qquad   \langle\bm\eta,\bm\zeta \rangle_{\partial\mcT{}} := \sum_{i = 1}^3 \langle \bm\eta_i,\bm\zeta_i \rangle_{\partial\mcT{}}.
\end{equation*}
For two arbitrary scalar functions $\eta$ and $\zeta$, its scalar product $(\eta,\zeta)_\mcal{D}$ is the integral of $\eta\zeta^*$ on $\mcal{D}$. 

\subsection{Numerical approximation}

We now describe an HDG method to numerically solve Maxwell's equations with the hydrodynamic model \eqref{eq:maxwellhydro} for a metallic computational domain $\overline{\Omega}$, which will serve as a building block towards more complicated scenarios. We introduce additional variables $\bV = i\omega\bH$, $U = \nabla\cdot\bJ$ and rewrite system \eqref{eq:maxwellhydro} as a first order system of equations in $\overline{\Omega}$:
 \begin{empheq}
[left={\overline{\mcal{L}} \; \empheqlbrace}]{align}
\nabla \times \bE - \bV &= 0 , \notag \\
\beta^2 \nabla U + \omega(\omega + i\gamma)\bJ - i\omega\omega_p^2\bE &= 0, \label{eq:firstordersystemhydro} \\
\nabla \times \bV - \omega^2\varepsilon_\infty \bE -i\omega\bJ & =0 , \notag \\
U -\nabla\cdot \bJ &= 0 .\notag
\end{empheq}
The additional variable $U$ is related to the induced free charge density in the metal as $i\omega\uprho =  U$. 


We seek  $(\bV_h,\bE_h,\bJ_h,U_h,\widehat{\bE}_h,\widehat{U}_h) \in \bm W_h \times \bm W_h \times \bm W_h \times W_h \times \bm M_h  \times M_h$ such that
\begin{equation}\label{eq:hdg_hydro1}
 \begin{aligned}
(\bV_h,\bm \kappa)_{\mcal{T}_h} - (\bE_h,\nabla\times \bm \kappa)_{\mcal{T}_h} - \langle \widehat{\bE}_h,\bm \kappa\times\bn \rangle_{\partial\mcal{T}_h} & =0 , \\
-\beta^2 (U_h,\nabla\cdot\bm\eta)_{\mcal{T}_h} + \beta^2\langle \widehat{U}_h,\bm\eta\cdot\bn \rangle_{\partial\mcal{T}_h} + \omega(\omega + i\gamma)(\bJ_h,\bm\eta)_{\mcal{T}_h} - i\omega\omega_p^2(\bE_h,\bm\eta)_{\mcal{T}_h} &= 0,\\
(\bV_h,\nabla \times \bm \xi)_{\mcal{T}_h} + \langle \widehat{\bV}_h,\bm \xi\times\bn \rangle_{\partial\mcal{T}_h} - \omega^2(\varepsilon_\infty\bE_h,\bm \xi)_{\mcal{T}_h} - i\omega(\bJ_h,\bm \xi)_{\mcal{T}_h}  & =0 ,\\
(U_h,\zeta)_{\mcal{T}_h} - \langle \widehat{\bJ}_h\cdot \bn,\zeta \rangle_{\partial\mcal{T}_h} + (\bJ_h,\nabla \zeta)_{\mcal{T}_h} &= 0, \\
-\langle \bn\times \widehat{\bV}_h,\bm\mu \rangle_{\partial\mcal{T}_h \backslash \partial \overline{\Omega}} +\langle \widehat{\bE}_h - \bE_\partial,\bm\mu \rangle_{\partial \overline{\Omega}_E} - \langle \bn\times \widehat{\bV}_h - \bn\times {\bV}_\partial  ,\bm\mu \rangle_{\partial\overline{\Omega}_V}&=0, \\
\langle \widehat{\bJ}_h \cdot \bn ,\theta \rangle_{\partial\mcal{T}_h} &= 0, 
\end{aligned} 
\end{equation}
holds for all $(\bm \kappa,\bm\eta,\bm \xi,\zeta,\bm\mu,\theta)\in \bm W_h \times \bm W_h \times \bm W_h \times W_h \times \bm M_h \times M_h$, where  $\widehat{\bE}_h $ approximates the tangential field of $\bE$, and $\widehat{U}_h$ approximates the trace of $U$. 
We close the system by introducing expressions for the hybrid fluxes of the magnetic field and electric current field as 
\begin{equation}
\begin{aligned}\label{eq:traces}
 \widehat{\bV}_h &= {\bV}_h + \tau_t(\bE_h - \widehat{\bE}_h)\times\bn,\\
  \widehat{\bJ}_h &= {\bJ}_h - \tau_n(U_h-\widehat{U}_h)\bn.
\end{aligned} 
\end{equation}
The parameters $\tau_t,\,\tau_n$ are the stabilization parameters, defined globally to ensure the accuracy and stability of the HDG discretization. We propose the choice $\tau_t =\sqrt{\varepsilon_\infty}\omega$ and $\tau_n = 1/\updelta = \omega_p/\beta$. This choice leads to numerically stable solutions even in the presence of tightly localized fields in the metal-dielectric interface. 

Substituting  \eqref{eq:traces} in \eqref{eq:hdg_hydro1} and integrating by parts, we write the final HDG discretization of the hydrodynamic model for metals as
\begin{equation}\label{eq:hdg_hydro}
 \begin{aligned}
(\bV_h,\bm \kappa)_{\mcal{T}_h} - (\bE_h,\nabla\times \bm \kappa)_{\mcal{T}_h} - \langle \widehat{\bE}_h,\bm \kappa\times\bn \rangle_{\partial\mcal{T}_h}  &=0 , \\
-\beta^2 (U_h,\nabla\cdot\bm\eta)_{\mcal{T}_h} + \beta^2\langle \widehat{U}_h,\bm\eta\cdot\bn \rangle_{\partial\mcal{T}_h} + \omega(\omega + i\gamma)(\bJ_h,\bm\eta)_{\mcal{T}_h} - i\omega\omega_p^2(\bE_h,\bm\eta)_{\mcal{T}_h} &= 0,\\
(\nabla \times \bV_h,\bm \xi)_{\mcal{T}_h} +\langle  \tau_t[\bE_h-\widehat{\bE}_h],\bn\times\bm \xi\times\bn \rangle_{\partial\mcal{T}_h} - \omega^2(\varepsilon_\infty\bE_h,\bm \xi)_{\mcal{T}_h} - i\omega(\bJ_h,\bm \xi)_{\mcal{T}_h}  &=0 , \\
- (\nabla\cdot {\bJ}_h,\zeta )_{\mcal{T}_h} + (U_h,\zeta)_{\mcal{T}_h}  + \langle \tau_n U_h,\zeta \rangle_{\partial\mcal{T}_h} - \langle \tau_n\widehat{U}_h,\zeta \rangle_{\partial\mcal{T}_h} &= 0,\\
-\langle \bn\times {\bV}_h + \tau_t \bE_h,\bm\mu \rangle_{\partial\mcal{T}_h \backslash \partial \overline{\Omega}} +\langle\widetilde{\tau}_t \widehat{\bE}_h,\bm\mu \rangle_{\partial\mcal{T}_h} - \langle {\bf f},\bm\mu \rangle_{\partial\overline{\Omega}}&= 0, \\
\langle {\bJ}_h \cdot \bn ,\theta \rangle_{\partial\mcal{T}_h}  -  \langle\tau_n U_h ,\theta \rangle_{\partial\mcal{T}_h} +  \langle\tau_n\widehat{U}_h ,\theta \rangle_{\partial\mcal{T}_h}&= 0.
\end{aligned} 
\end{equation}
where 
\begin{equation}\label{eq:taucompressed}
\widetilde{\tau}_t = 
\begin{cases}
 \tau_t, & \mbox{on } \partial\mcal{T}_h \backslash \partial \Omega_E \\
  1, & \mbox{on } \partial\Omega_{E} \\
\end{cases}
,
\qquad \qquad
{\bf f} = 
\begin{cases}
\bE_\partial, & \mbox{on }  \partial \Omega_E \\
-\bn\times {\bV}_\partial , & \mbox{on } \partial\Omega_{V} \\
\end{cases}
.
\end{equation}
The first four equations represent the weak formulation of equations \eqref{eq:firstordersystemhydro}, whereas the last two equations enforce zero jump in the tangential component of $\bV_h$ and in the normal component of $\bJ_h$ respectively, along with the appropriate boundary conditions. 

We now complete the definition of the HDG method for Maxwell's equation with the hydrodynamic model, by showing the method is consistent, conservative and well defined.
\newline
\begin{theorem}
 The HDG method defined by \eqref{eq:hdg_hydro} is consistent and its numerical fluxes are uniquely defined over the edges, therefore is also conservative.
\end{theorem}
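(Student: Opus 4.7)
The plan is to split the proof into two independent verifications: (i) consistency, by plugging the exact (smooth enough) solution of \eqref{eq:firstordersystemhydro} into \eqref{eq:hdg_hydro} and checking that every equation is satisfied, and (ii) single-valuedness of the numerical fluxes $\widehat{\bV}_h$ and $\widehat{\bJ}_h$ on element interfaces, from which conservation follows by definition.

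For \textbf{consistency}, I would let $(\bV,\bE,\bJ,U)$ denote the exact solution of \eqref{eq:firstordersystemhydro} and take the hybrid unknowns to be the traces $\widehat{\bE}_h \leftarrow (\bn\times\bE\times\bn)|_{\mcal{E}_h}$ and $\widehat{U}_h \leftarrow U|_{\mcal{E}_h}$, which are well defined on faces because $\bE$ has continuous tangential components and $U = \nabla\cdot\bJ$ inherits a single-valued trace from the $\bm H^{\dive}$-regularity of $\bJ$ (equivalently, $\bn\cdot\bJ$ is continuous across faces). Substituting these values into \eqref{eq:traces} immediately gives $\widehat{\bV}_h = \bV$ and $\widehat{\bJ}_h = \bJ$ on each face, so the stabilization contributions $\tau_t(\bE_h-\widehat{\bE}_h)\times\bn$ and $\tau_n(U_h-\widehat{U}_h)\bn$ vanish. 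The first four equations of \eqref{eq:hdg_hydro1} then reduce, after element-wise integration by parts, to the weak form of \eqref{eq:firstordersystemhydro} tested against arbitrary polynomial test functions, which is satisfied by the strong solution. The fifth and sixth equations are satisfied as well: the tangential trace of $\bV$ is single-valued on interior faces (from $\bE\in\bm H^{\curl}$ and the constitutive relation), so the jump $\llbracket \bn\times\widehat{\bV}_h\rrbracket$ vanishes and the boundary terms equal $\bE_\partial$ or $\bn\times\bV_\partial$ as required; similarly $\llbracket\widehat{\bJ}_h\cdot\bn\rrbracket = 0$ and $\bn\cdot\bJ = 0$ on $\partial\overline{\Omega}$. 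One then passes from \eqref{eq:hdg_hydro1} to \eqref{eq:hdg_hydro} by the same integrations by parts used to derive the scheme, so consistency transfers.

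For \textbf{unique definition of the fluxes}, I would work face by face. Fix an interior face $F = \partial T^+ \cap \partial T^-$ and a test function $\bm\mu\in\bm M_h$ supported on $F$; the fifth equation of \eqref{eq:hdg_hydro}, in its equivalent form \eqref{eq:hdg_hydro1}, reads $\langle \bn^+\times\widehat{\bV}_h^+ + \bn^-\times\widehat{\bV}_h^-, \bm\mu\rangle_F = 0$. Since $\bm\mu$ is an arbitrary tangential polynomial on $F$ and $\widehat{\bV}_h^\pm$ are themselves polynomials on $F$, the integrand must vanish; using $\bn^+ = -\bn^-$ this yields $(\widehat{\bV}_h^+)^t = (\widehat{\bV}_h^-)^t$, the only component that enters the formulation. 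The boundary terms in \eqref{eq:taucompressed} fix $\widehat{\bV}_h$ on $\partial\overline{\Omega}_V$ and $\widehat{\bE}_h$ on $\partial\overline{\Omega}_E$ uniquely through $\bf f$, so $\widehat{\bV}_h$ is single-valued on every face. An identical argument applied to the sixth equation with $\theta\in M_h$ gives $\widehat{\bJ}_h^+\cdot\bn^+ + \widehat{\bJ}_h^-\cdot\bn^- = 0$, i.e.\ single-valuedness of $\widehat{\bJ}_h\cdot\bn$, while on $\partial\overline{\Omega}$ the same equation enforces the no-spill-out condition $\bn\cdot\widehat{\bJ}_h = 0$. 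Because the numerical fluxes carry no jump across interior faces, the scheme is by definition locally conservative.

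The main obstacle I expect is bookkeeping rather than conceptual: correctly reconciling the two equivalent forms \eqref{eq:hdg_hydro1} and \eqref{eq:hdg_hydro} (they differ by integration by parts and by the substitution of \eqref{eq:traces}), and making sure that the modified parameter $\widetilde{\tau}_t$ and the piecewise source $\bf f$ in \eqref{eq:taucompressed} do not spoil the jump argument on boundary faces. In particular, on $\partial\overline{\Omega}_E$ the stabilization coefficient in the trace equation is replaced by $1$, so one must check that this alteration is consistent with the Dirichlet condition $\bn\times\bE\times\bn = \bE_\partial$ rather than with flux continuity, which is fine because on boundary faces there is no neighboring element and single-valuedness is trivial. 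Once these boundary-face details are handled, the interior-face argument proceeds uniformly and the proposition follows.
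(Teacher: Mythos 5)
Your proposal is correct and follows essentially the same route as the paper: conservativity is obtained from the last two equations of \eqref{eq:hdg_hydro1}, which force the numerical traces $\bn\times\widehat{\bV}_h$ and $\widehat{\bJ}_h\cdot\bn$ to be single valued across interior faces (the paper just makes this explicit by solving for the traces), and consistency by inserting the exact solution so that the stabilization terms in \eqref{eq:traces} vanish and the first four equations reduce, after integrating back by parts, to the strong system \eqref{eq:firstordersystemhydro}. One cosmetic slip: the single-valued trace $\widehat{U}=U$ for the exact solution comes from $U=\nabla\cdot\bJ$ lying in $H^1(\overline{\Omega})$, as the paper assumes, not from the $\bm H^{\dive}$-regularity of $\bJ$, which only controls the continuity of $\bn\cdot\bJ$.
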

\begin{proof}
 The last two equations of \eqref{eq:hdg_hydro1} imply that
 \begin{align*}
  \llbracket \bn\times\widehat{\bV}_h\rrbracket &= 0,\quad \mbox{on } \mcal{E}_h^o,\\
   \llbracket \bn\cdot\widehat{\bJ}_h\rrbracket &= 0,\quad \mbox{on } \mcal{E}_h^o.
\end{align*}
Substituting \eqref{eq:traces} into the expressions above we arrive at
 \begin{align*}
  \llbracket \bn\times{\bV}_h\rrbracket +\tau_t^+\bE_h^+ + \tau_t^- \bE_h^- - (\tau_t^+ + \tau_t^-)\widehat{\bE}_h&= 0,\quad \mbox{on } \mcal{E}_h^o,\\
  \llbracket \bn\cdot{\bJ}_h\rrbracket -\tau_t^+U_h^+ - \tau_t^-U_h^- + (\tau_t^+ + \tau_t^-)\widehat{U}_h&= 0,\quad \mbox{on } \mcal{E}_h^o.
\end{align*}
Isolating the value of the traces we get
\begin{equation}\label{eq:tracesEU}
  \begin{aligned}
  \widehat{\bE}_h &= \dfrac{\tau_t^+\bE_h^+ + \tau_t^-\bE_h^-  + \llbracket\bn\times\bV_h \rrbracket}{\tau_t^+ + \tau_t^-},\quad \mbox{on } \mcal{E}_h^o,\\
   \widehat{U}_h &= \dfrac{\tau_t^+U_h^+ + \tau_t^-U_h^- -\llbracket \bn\cdot\bJ_h\rrbracket}{\tau_t^+ + \tau_t^-} ,\quad \mbox{on } \mcal{E}_h^o,
\end{aligned}
\end{equation}
and substituting these expressions into \eqref{eq:traces} we obtain
\begin{equation}\label{eq:tracesVQ}
  \begin{aligned}
  \widehat{\bV}_h &= \dfrac{\tau_t^+\bV_h^- + \tau_t^-\bV_h^+ + \tau_t^+\tau_t^- \llbracket\bE_h\times\bn\rrbracket}{\tau_t^+ + \tau_t^-},\quad \mbox{on } \mcal{E}_h^o,\\
    \widehat{\bJ}_h &= \dfrac{\tau_t^+\bJ_h^- + \tau_t^-\bJ_h^+ - \tau_t^+\tau_t^- \llbracket U_h\bn\rrbracket}{\tau_t^+ + \tau_t^-},\quad \mbox{on } \mcal{E}_h^o.
\end{aligned}
\end{equation}
The expressions \eqref{eq:tracesEU} and \eqref{eq:tracesVQ} show that the numerical traces of the HDG method are single valued across inter-element faces, hence the HDG method is conservative by virtue of the definition of conservation introduced in \cite{arnold2002unified} for DG methods. Furthermore, since $\bE\in \bm H^{\curl}(\Omega)$ and $U\in H^1(\Omega)$, we have $\widehat{\bE} = \bE^t$ and $\widehat{U}=U$ on $\mcal{E}_h$. It follows from expressions \eqref{eq:traces} that $\widehat{\bV} = \bV$ and $\widehat{\bJ} = \bJ$. Finally, if we substitute them into the first four equations of \eqref{eq:hdg_hydro1} and integrating back by parts, we arrive at
 \begin{align*}
(\bV -\nabla\times\bE,\bm \kappa)_{\mcal{T}_h} & =0 , \\
(\beta^2 \nabla U + \omega(\omega + i\gamma)\bJ - i\omega\omega_p^2\bE ,\bm\eta)_{\mcal{T}_h} &= 0,\\
(\nabla \times \bV - \omega^2\varepsilon_\infty \bE -i\omega\bJ ,\bm \xi)_{\mcal{T}_h}  & =0 ,\\
(U -\nabla\cdot \bJ, \zeta)_{\mcal{T}_h} &= 0.
\end{align*} 
The exact solution of \eqref{eq:maxwellhydro} is therefore a solution of the HDG formulation \eqref{eq:hdg_hydro1}, thus the HDG method is consistent.
\end{proof}
In addition, it can also be shown that the solution of the HDG method proposed is unique away from the resonances.
\newline
\begin{theorem}
 Assume that both $\omega^2\varepsilon_\infty $ and $\omega(\omega+i\gamma)$ are different from the eigenvalues $\lambda_1,\,\lambda_2$ of the following eigenproblem: find $\lambda_1,\,\lambda_2\in\mbb{C}$ and $(\bN_h,\bQ_h,\bS_h,\psi_h,\widehat{\bQ}_h,\widehat{\psi}_h) \in \bm W_h \times \bm W_h \times \bm W_h \times W_h \times \bm M_h ({\bf 0}) \times M_h$ such that
 \begin{equation}\label{eq:eigenproblem}
 \begin{aligned}
(\bN_h,\bm \kappa)_{\mcal{T}_h} - (\bQ_h,\nabla\times \bm \kappa)_{\mcal{T}_h} - \langle \widehat{\bQ}_h,\bm \kappa\times\bn \rangle_{\partial\mcal{T}_h}  &=0 , \\
-\beta^2 (\psi_h,\nabla\cdot\bm\eta)_{\mcal{T}_h} + \beta^2\langle \widehat{\psi}_h,\bm\eta\cdot\bn \rangle_{\partial\mcal{T}_h}  - i\omega\omega_p^2(\bQ_h,\bm\eta)_{\mcal{T}_h} &= -\lambda_2(\bS_h,\bm\eta)_{\mcal{T}_h},\\
(\nabla \times \bN_h,\bm \xi)_{\mcal{T}_h} + \tau_t\langle \bQ_h-\widehat{\bQ}_h ,\bn\times\bm \xi\times\bn \rangle_{\partial\mcal{T}_h}  - i\omega(\bS_h,\bm \xi)_{\mcal{T}_h}  &= \lambda_1(\bQ_h,\bm \xi)_{\mcal{T}_h} , \\
- (\nabla\cdot {\bS}_h,\zeta )_{\mcal{T}_h} + (\psi_h,\zeta)_{\mcal{T}_h}  + \tau_n\langle \psi_h,\zeta \rangle_{\partial\mcal{T}_h} - \tau_n\langle \widehat{\psi}_h,\zeta \rangle_{\partial\mcal{T}_h} &= 0,\\
-\langle \bn\times {\bN}_h + \tau_t (\bQ_h-\widehat{\bQ}_h),\bm\mu \rangle_{\partial\mcal{T}_h }&= 0, \\
\langle {\bS}_h \cdot \bn ,\theta \rangle_{\partial\mcal{T}_h}  - \tau_n \langle \psi_h ,\theta \rangle_{\partial\mcal{T}_h} + \tau_n \langle\widehat{\psi}_h ,\theta \rangle_{\partial\mcal{T}_h}&= 0,
\end{aligned} 
\end{equation}
for any $(\bm \kappa,\bm\eta,\bm \xi,\zeta,\bm\mu,\theta)\in \bm W_h \times \bm W_h \times \bm W_h \times W_h \times \bm M_h({\bf 0}) \times M_h$. Furthermore, if the stabilization parameters are positive on $\partial\mcal{T}_h$, then the HDG solution $(\bV_h,\bE_h,\bJ_h,U_h,\widehat{\bE}_h,\widehat{U}_h)$ exists and is uniquely defined.
\begin{proof}
Since the square system above is linear and finite dimensional, it is sufficient to show that the trivial solution is the unique solution of \eqref{eq:hdg_hydro} if $\bE_\partial = \bV_\partial = 0$. If we take $\bm \kappa = \bV_h,\, \bm\eta  = \bJ_h,\, \bm \xi = \bE_h,\, \zeta = U_h,\, \bm \mu = \widehat{\bE}_h$ and $\theta  = \widehat{U}_h$ in \eqref{eq:hdg_hydro}, multiply the second equation by $-1/\omega_p^2$, the fourth and sixth by $\beta^2/\omega_p^2$, and add them together, we arrive at
\begin{equation*}
\begin{split}
&(\bV_h,\bV_h)_{\mcal{T}_h} + \tau_t\langle (\bE_h-\widehat{\bE}_h)\times\bn ,(\bE_h-\widehat{\bE}_h)\times\bn\rangle_{\partial\mcal{T}_h} + \dfrac{\beta^2}{\omega_p^2} (U_h,U_h)_{\mcal{T}_h} +\\& \tau_n\langle U_h-\widehat{U}_h,U_h-\widehat{U}_h\rangle_{\partial\mcal{T}_h} = \omega^2\varepsilon_\infty(\bE_h,\bE_h)_{\mcal{T}_h} + \dfrac{\omega(\omega+i\gamma)}{\omega_p^2} (\bJ_h,\bJ_h)_{\mcal{T}_h}.
\end{split}
\end{equation*}
Similarly, for the eigenproblem in \eqref{eq:eigenproblem} we have
\begin{equation*}
\begin{split}
&(\bN_h,\bN_h)_{\mcal{T}_h} + \tau_t\langle (\bQ_h-\widehat{\bQ}_h)\times\bn ,(\bQ_h-\widehat{\bQ}_h)\times\bn\rangle_{\partial\mcal{T}_h} + \dfrac{\beta^2}{\omega_p^2} (\psi_h,\psi_h)_{\mcal{T}_h} +\\& \tau_n\langle \psi_h-\widehat{\psi}_h,\psi_h-\widehat{\psi}_h\rangle_{\partial\mcal{T}_h} = \lambda_1(\bQ_h,\bQ_h)_{\mcal{T}_h} + \dfrac{\lambda_2}{\omega_p^2}(\bS_h,\bS_h)_{\mcal{T}_h}.
\end{split}
\end{equation*}
It follows from the previous two equations that both $\bE_h$ and $\bJ_h$ are zero; otherwise, $\omega^2\varepsilon_\infty $ and $\omega(\omega+i\gamma)$ must be eigenvalues of \eqref{eq:eigenproblem} which contradicts the hypothesis. As a consequence, we get
\begin{equation*}
(\bV_h,\bV_h)_{\mcal{T}_h} + \tau_t\langle \widehat{\bE}_h\times\bn ,\widehat{\bE}_h\times\bn\rangle_{\partial\mcal{T}_h} + \dfrac{\beta^2}{\omega_p^2} (U_h,U_h)_{\mcal{T}_h} + \tau_n\langle U_h-\widehat{U}_h,U_h-\widehat{U}_h\rangle_{\partial\mcal{T}_h} =0,
\end{equation*}
hence $\bV_h = 0,\,\widehat{\bE}_h=0,\,U_h = 0$ and $\widehat{U}_h = 0$ since the stabilization constants are strictly positive. In consequence, the trivial solution is the unique solution of the HDG discretization with homogeneous boundary conditions, thus completing the proof.
\end{proof}
\end{theorem}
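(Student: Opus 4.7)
The plan is to reduce the statement to proving uniqueness. Since \eqref{eq:hdg_hydro} is a square, finite-dimensional linear system, existence is equivalent to the assertion that the homogeneous problem (with $\bE_\partial = \bV_\partial = 0$) admits only the trivial solution. So I would fix the data to zero and work only with the homogeneous HDG equations, aiming to deduce that each component of $(\bV_h,\bE_h,\bJ_h,U_h,\widehat{\bE}_h,\widehat{U}_h)$ vanishes.

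The natural route is an energy identity obtained by self-testing. I would take the test functions to equal the trial unknowns: $\bm\kappa = \bV_h$, $\bm\eta = \bJ_h$, $\bm\xi = \bE_h$, $\zeta = U_h$, $\bm\mu = \widehat{\bE}_h$, and $\theta = \widehat{U}_h$. Adding the first and third equations lets the cross terms $(\bV_h,\nabla\times\bE_h)_{\mcT{}}$ and $(\nabla\times\bV_h,\bE_h)_{\mcT{}}$ balance against the skeleton contributions through the trace equation for $\widehat{\bE}_h$; analogously, combining the fourth and sixth equations (after scaling by $\beta^2/\omega_p^2$) pairs $(U_h,\nabla\cdot\bJ_h)_{\mcT{}}$ with the $\widehat{U}_h$ trace equation. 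Including the second equation scaled by $-1/\omega_p^2$ eliminates the cross term $(\bE_h,\bJ_h)_{\mcT{}}$. The weighting choices are dictated by matching the complex conjugates correctly so that the skeleton stabilization terms assemble into squares.

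After this bookkeeping, I expect to reach an identity of the schematic form
\begin{equation*}
(\bV_h,\bV_h)_{\mcT{}} + \tau_t\langle (\bE_h-\widehat{\bE}_h)\times\bn,(\bE_h-\widehat{\bE}_h)\times\bn\rangle_{\partial\mcT{}} + \tfrac{\beta^2}{\omega_p^2}(U_h,U_h)_{\mcT{}} + \tau_n\langle U_h-\widehat{U}_h,U_h-\widehat{U}_h\rangle_{\partial\mcT{}} = \omega^2\varepsilon_\infty(\bE_h,\bE_h)_{\mcT{}} + \tfrac{\omega(\omega+i\gamma)}{\omega_p^2}(\bJ_h,\bJ_h)_{\mcT{}}.
\end{equation*}
Applying the identical manipulation to the eigenproblem \eqref{eq:eigenproblem} yields the same identity with $\lambda_1$ and $\lambda_2$ replacing the two coefficients on the right-hand side. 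If either $\bE_h$ or $\bJ_h$ were nonzero, the tuple $(\bV_h,\bE_h,\bJ_h,U_h,\widehat{\bE}_h,\widehat{U}_h)$ would furnish a nontrivial element of the eigenproblem's solution set with $\lambda_1 = \omega^2\varepsilon_\infty$ or $\lambda_2 = \omega(\omega+i\gamma)$, contradicting the hypothesis. Hence $\bE_h = 0 = \bJ_h$.

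With $\bE_h$ and $\bJ_h$ eliminated, the identity collapses to a sum of four nonnegative terms equal to zero. Strict positivity of $\tau_t$, $\tau_n$, and $\beta^2/\omega_p^2$ then forces $\bV_h=0$, $U_h=0$, $\widehat{\bE}_h\times\bn = 0$ (equivalently $\widehat{\bE}_h = 0$ since elements of $\bm M_h$ are tangential by construction), and $\widehat{U}_h = U_h = 0$, completing the uniqueness argument. The main obstacle I anticipate is the algebraic pairing that assembles the skeleton stabilization into sum-of-squares form: the $\bE_h$ vs. $\widehat{\bE}_h$ contribution from the third equation does not immediately match the $\widehat{\bE}_h$-self-pairing from the fifth, and one must use carefully that the cross terms $\tau_t\langle \bE_h,\widehat{\bE}_h\rangle$ appearing from each side combine, after complex conjugation, to produce the desired Hermitian square $\tau_t\|(\bE_h-\widehat{\bE}_h)\times\bn\|^2$, with the same phenomenon for the $U_h$--$\widehat{U}_h$ pair.
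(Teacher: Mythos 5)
Your proposal follows essentially the same route as the paper's own proof: the same reduction to uniqueness for the homogeneous system, the same choice of test functions $(\bV_h,\bJ_h,\bE_h,U_h,\widehat{\bE}_h,\widehat{U}_h)$ with the identical scalings ($-1/\omega_p^2$ on the second equation, $\beta^2/\omega_p^2$ on the fourth and sixth), leading to the same energy identity, the same comparison with the eigenproblem to force $\bE_h=\bJ_h=0$, and the same collapse of the remaining nonnegative terms. No substantive differences to report.
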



\subsection{Implementation}\label{sec:imp}
The system of equations in \eqref{eq:hdg_hydro} is rewritten for convenience in terms of several bilinear forms. The weak formulation reads: find $(\bE_h,\bV_h,\bJ_h,U_h,\widehat{\bE}_h,\widehat{U}_h) \in \bm W_h \times \bm W_h \times \bm W_h \times W_h \times \bm M_h ({\bf 0}) \times M_h$ such that
\begin{align}
\mathscr{A}(\bV_h,\bm \kappa) - \mathscr{B}(\bE_h,\bm \kappa)-\mathscr{C}(\widehat{\bE}_h,\bm \kappa)&= 0,\notag\\
\omega(\omega+i\gamma)\mathscr{A}(\bJ_h,\bm\eta) -i\omega\omega_p^2 \mathscr{A}({\bE}_h,\bm\eta) -\beta^2 \mathscr{P}(U_h,\bm\eta) + \beta^2 \mathscr{O}(\widehat{U}_h,\bm\eta) &=0,\notag\\
\mathscr{B}(\bm \xi,\bV_h) - i\omega \mathscr{A}(\bJ_h,\bm \xi) + \mathscr{D}(\bE_h,\bm \xi) - \omega^2\mathscr{A}_\varepsilon(\bE_h,\bm \xi)-\mathscr{E}(\widehat{\bE}_h,\bm \xi) & = 0,\label{eq:hdghydro_weaksystem}\\
-\mathscr{P}(\zeta,\bJ_h) + \mathscr{H}(U_h,\zeta) - \mathscr{N}(\widehat{U}_h,\zeta) & =0,\notag\\
-\mathscr{R}(\bV_h,\bm\mu) -\mathscr{L}(\bE_h,\bm\mu) + \mathscr{M}(\widehat{\bE}_h,\bm\mu) & = \mathscr{F}(\bm\mu),\notag\\
\mathscr{O}(\theta,\bJ_h) - \mathscr{N}(\theta,U_h) + \mathscr{T}(\widehat{U}_h,\theta) &= 0,\notag
\end{align}
holds for all $(\bm \kappa,\bm\eta,\bm \xi,\zeta,\bm\mu,\theta)\in \bm W_h \times \bm W_h \times \bm W_h \times W_h \times \bm M_h({\bf 0}) \times M_h$. The bilinear forms are given by
\begin{equation*}
\begin{array}{ll}
\mathscr{A}(\bV,\bm \kappa)= (\bV,\bm \kappa)_{\mathcal{T}_h}  ,&
\mathscr{A}_\varepsilon(\bE,\bm \xi)= (\varepsilon_\infty\bE,\bm\xi)_{\mathcal{T}_h},\\
\mathscr{B}(\bE,\bm \kappa)= (\bE,\nabla\times \bm \kappa)_{\mathcal{T}_h},&
\mathscr{C}(\widehat{\bE},\bm \kappa)= \langle\widehat{\bE},\bm \kappa\times\bn\rangle_{\partial \mathcal{T}_h},\\
\mathscr{P}(U,\bm\eta) = (U,\nabla\cdot\bm\eta)_{\mathcal{T}_h} ,&
\mathscr{O}(\widehat{U},\bm\eta) = \langle U,\bm\eta\cdot\bn \rangle_{\mathcal{T}_h} ,\\
\mathscr{D}(\bE,\bm \xi)= \langle\tau_t \bE,\bn\times \bm \xi \times\bn \rangle_{\partial \mathcal{T}_h},&
\mathscr{E}(\widehat{\bE},\bm \xi) = \langle \tau_t\widehat{\bE},\bm \xi\rangle_{\partial \mathcal{T}_h},\\
\mathscr{H}(U,\zeta) = (U,\zeta)_{\mathcal{T}_h} + \langle \tau_nU,\zeta\rangle_{\partial \mathcal{T}_h},&
\mathscr{N}(\widehat{U},\zeta) = \langle \tau_n\widehat{U},\zeta\rangle_{\partial \mathcal{T}_h},\\
\mathscr{R}(\bV,\bm\mu) = \langle \bn\times\bV,\bm\mu\rangle_{\partial \mathcal{T}_h\backslash\partial\overline{\Omega}_E},&
\mathscr{L}(\bE,\bm\mu) = \langle \tau_t\bE,\bm\mu\rangle_{\partial \mathcal{T}_h\backslash\partial\overline{\Omega}_E},\\
\mathscr{M}(\widehat{\bE},\bm\mu) = \langle \widetilde{\tau}_t\widehat{\bE},\bm\mu\rangle_{\partial \mathcal{T}_h},&
\mathscr{T}(\widehat{U},\theta) = \langle \tau_n\widehat{U},\theta\rangle_{\partial \mathcal{T}_h}, \\
\mathscr{F}(\bm\mu)= \langle {\bf f},\bm\mu\rangle_{\partial\overline{\Omega}} .
\end{array}
\end{equation*}
We then discretize the above bilinear forms using the corresponding basis functions on each element/face of the triangulation $\mathcal{T}_h$, and assemble the system of equations that arises from the weak formulation in \eqref{eq:hdghydro_weaksystem}, namely
\begin{equation*}
\left[\begin{array}{cccccc}
\mbb{A} & 0& -\mbb{B} & 0 &-\mbb{C} & 0\\
0& \omega(\omega+i\gamma)\mbb{A} & -i\omega\omega_p^2\mbb{A} & -\beta^2\mbb{P} &0 & \beta^2\mbb{O}\\
\mbb{B}^T & -i\omega\mbb{A} & \mbb{D} -\omega^2\mbb{A}_\varepsilon & 0 & -\mbb{E} &0 \\
0 & -\mbb{P}^T& 0& \mbb{H} & 0 &-\mbb{N}\\
-\mbb{R}& 0& -\mbb{L} & 0 &-\mbb{M} &0\\
0 & -\mbb{O}^T& 0& -\mbb{N}^T & 0 &\mbb{T}\\
\end{array}\right] \left[\begin{array}{c} \underline{\bV} \\ \underline{\bJ} \\ \underline{\bE} \\ \underline{U}\\ \underline{\widehat{\bE}}\\ \underline{\widehat{U}}\end{array} \right] = \left[\begin{array}{c} 0\\ 0\\ 0 \\0 \\ {\bf F}\\ 0\end{array} \right]
\end{equation*}
where $\underline{\bE},\,\underline{\bV},\,\underline{\bJ},\,\underline{U},\,\underline{\widehat{\bE}},\,\underline{\widehat{U}}$ are vectors containing the values of the corresponding fields at the degrees of freedom defined by the discretization $\mcal{T}_h$. The system above, however, is never formed in practice. Instead, we invoke the discontinuity of the approximation spaces to locally eliminate the degrees of freedom of $\bm\Upsilon = (\underline{\bV},\,\underline{\bJ},\,\underline{\bE},\,\underline{U})$, or local unknowns, and express them as a function of only the degrees of freedom of the approximate traces  $\widehat{\bm\Upsilon} = [\underline{\widehat{\bE}},\,\underline{\widehat{U}}]$, or global unknowns. This numerical strategy, also known as hybridization, is paramount to achieve an efficient implementation of the HDG method. The relation between global and local unknowns ${\bm\Upsilon} = \mbb{Z} \widehat{\bm\Upsilon}$, defined at the element level, takes the form
\begin{equation}\label{eq:hdghydro_local}
\left[\begin{array}{c}\underline{\bV} \\ \underline{\bJ} \\ \underline{\bE} \\ \underline{U}\end{array} \right] = \left[\begin{array}{cccc}
\mbb{A} & 0& -\mbb{B} & 0 \\
0& \omega(\omega+i\gamma)\mbb{A} & -i\omega\omega_p^2\mbb{A} & -\beta^2\mbb{P} \\
\mbb{B}^T & -i\omega\mbb{A} & \mbb{D} -\omega^2\mbb{A}_\varepsilon & 0  \\
0 & -\mbb{P}^T& 0& \mbb{H} \\
\end{array}\right]^{-1} \left[\begin{array}{cc}
\mbb{C} & 0\\
0 & -\beta^2\mbb{O}\\
\mbb{E} &0 \\
0 &\mbb{N}\\
\end{array}\right]  \widehat{\bm\Upsilon}, 
\end{equation}
which can be computed efficiently since the matrix is block diagonal, due to the discontinuous nature of the approximation spaces.  The elimination of degrees of freedom through hybridization renders a linear system that involves only the global degrees of freedom, defined at the discretization faces. Hence, we eliminate the local unknowns -- 10 components defined in the high-order volume nodes-- and solve only for the global unknowns -- 3 components defined in the high-order face nodes-- thus drastically reducing the size of the linear system that must be solved. This is one of the most attractive features of the HDG method. Finally, the system involving only the global unknowns is given by 
\begin{equation}
 \lp \left[\begin{array}{cc}
-\mbb{M} &0\\
 0 &\mbb{T}\\
\end{array}\right] +  \left[\begin{array}{cccc}
-\mbb{R}& 0& -\mbb{L} & 0\\
0 & -\mbb{O}^T& 0& -\mbb{N}^T\\
\end{array}\right]\mbb{Z} \rp \widehat{\bm\Upsilon} = \left[\begin{array}{c} {\bf F} \\ 0\end{array} \right].
\end{equation}
This procedure characterizes the solution to \eqref{eq:hdg_hydro} in terms of $\widehat{\bE}_h$ and $\widehat{U}_h$. The local volume variables can be recovered at the element level through \eqref{eq:hdghydro_local}, incurring a small cost as it only involves a matrix-vector product per element, and can be trivially parallelized across elements.

\subsection{Local postprocessing}
We now propose a postprocessing scheme which exploits the superconvergence properties of the HDG method and allows us to recover a more accurate solution in an inexpensive manner. The postprocessed electric and magnetic fields achieve an additional order of convergence in the $\bm H^{\curl}(\mcal{T}_h)$-norm, and according to \cite{cockburn2016hdg} they may be obtained  by solving in each element $T\in\mcal{T}_h$ the following problem
\begin{equation*}
\begin{aligned}
\lp\nabla\times\bE_h^*,\bm \kappa\rp_T&= \lp\bV_h,\bm \kappa\rp_T,\qquad& \forall \bm\kappa& \in  \nabla\times\lb \mcal{P}^{p+1}\lp T\rp \rb^3\;, \\
 \lp \bE_h^*,\bm \xi\rp_T&= \lp \bE_h,\bm \xi\rp_T,\qquad& \forall \bm\xi& \in  \nabla\mcal{P}^{p+2}\lp T\rp  \;, 
\end{aligned} 
\end{equation*}
for the postprocessed electric field $\bE_h^*\in \lb \mcal{P}^{p+1}\lp T\rp \rb^3$, along with
\begin{equation*}
\begin{aligned}
\lp \nabla\times\bV_h^*,\bm \kappa\rp_T&= \lp \omega^2\varepsilon_\infty\bE_h + i\omega\bJ_h,\bm \kappa\rp_T,\qquad& \forall \bm\kappa& \in  \nabla\times\lb \mcal{P}^{p+1}\lp T\rp \rb^3\;, \\
 \lp \bV_h^*,\bm \xi\rp_T&= \lp \bV_h,\bm \xi\rp_T,\qquad& \forall \bm\xi& \in   \nabla \mcal{P}^{p+2}\lp T\rp  \;, 
\end{aligned} 
\end{equation*}
for the postprocessed magnetic field $\bV_h^*\in \lb \mcal{P}^{p+1}\lp T\rp \rb^3$. The curl of $\bE_h^*,\,\bV_h^*$ corresponds to projections onto the subspace of functions in $\lb \mcal{P}^{p+1}\lp T\rp \rb^3$ with zero divergence, hence we expect a $p+1$ convergence rate for the postprocessed variables in $\bm H^{\curl}\lp \mcal{T}_h\rp$-norm.

Similarly, the electric current may be postprocessed by solving
\begin{equation*}
\begin{aligned}
\lp\nabla\cdot\bJ_h^*,\zeta \rp_T&= \lp U_h,\zeta \rp_T,\qquad& \forall \zeta & \in  \mcal{P}^{p+1}\lp T\rp \;, \\
 \lp \bJ_h^*,\bm \xi\rp_T&= \lp \bJ_h,\bm \xi\rp_T,\qquad& \forall \bm\xi& \in  \lb \mcal{P}^{p+1}\lp T\rp \rb^3  \;,
\end{aligned} 
\end{equation*}
where $\bJ_h^*\in \lb \mcal{P}^{p+1}\lp T\rp \rb^3$ achieves a $p+1$ convergence rate in the $\bm H^{\dive}(\mcal{T}_h)$-norm. Finally, in order to postprocess the additional variable $U_h$, we recall that $\nabla U_h$ can be computed locally by virtue of the third equation in \eqref{eq:firstordersystemhydro}. Hence, we can recover a postprocessed $U_h^*\in\mcal{P}^{p+1}\lp T\rp$ element-wise solving
\begin{equation*}
\begin{aligned}
\lp \nabla U_h^*, \nabla \zeta\rp_T&= \frac{1}{\beta^2}\lp i\omega\omega_p^2 \bE_h - \omega\lp \omega+i\gamma\rp\bJ_h,\nabla \zeta\rp_T,\qquad& \forall \zeta& \in   \mcal{P}^{p+1}\lp T\rp\;, \\
 \lp U_h^*,1\rp_T&= \lp U_h,1\rp_T,\qquad&   \;.
\end{aligned} 
\end{equation*}
which is shown to converge at the rate of $p+2$.

The main advantage of this approach is that the postprocessed approximate fields $(\bV_h^*,\bE_h^*,\bJ_h^*,U_h^*)$ are significantly less expensive to obtain than the original approximate fields $(\bV_h,\bE_h,\bJ_h,U_h)$, as its computation does not involve the solution of any global system. Furthermore, each variable is independently postprocessed at the element level, hence the linear systems above are much smaller than the linear system \eqref{eq:hdghydro_local} required to assemble the global system during hybridization. In addition, local postprocessing is an embarrassingly parallel task. It can therefore be concluded that postprocessing the local variables will have a minor impact in the overall computational cost.

\subsection{Metal-dielectric coupling}
In this section, we examine the scenario where a metal $\overline{\Omega}$, described by the hydrodynamic model, is embedded in a dielectric medium $\Omega$ with permittivity $\varepsilon_d$. Consider, for instance, a metallic nanostructure, surrounded by a dielectric medium, scattering an incident ${\bf p}$-polarized plane wave $\bE_0$ propagating in the ${\bf d}$-direction, that is ${\bf E}_0 = {\bf p}\exp(i\omega\sqrt{\varepsilon_d}\,{\bf d}\cdot \bx)$, as shown in Fig. \ref{fig:coupling} (left).

\begin{figure}[h!]
 \centering
 \includegraphics[scale = 1]{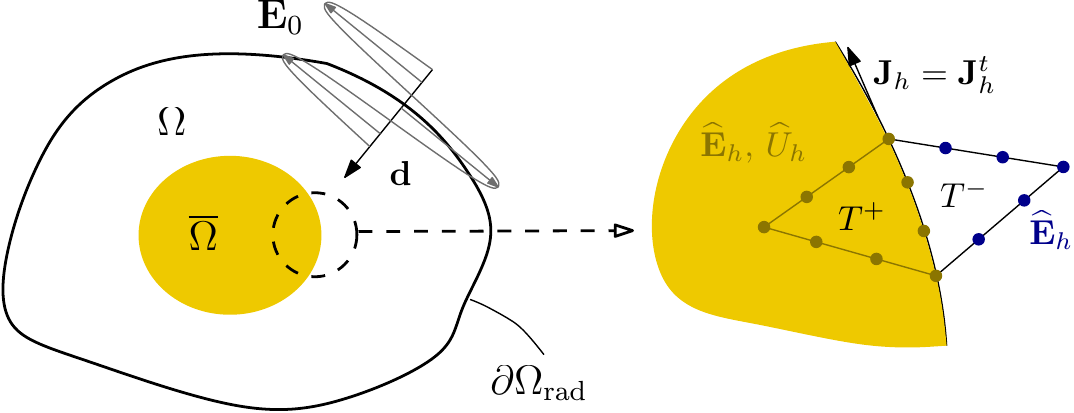}
 \caption{Left: Metallic structure $\overline{\Omega}$ embedded in dielectric $\Omega$ illuminated by plane wave. Right: Detail of metal-dielectric interface with global degrees of freedom.}\label{fig:coupling}
\end{figure}

In this situation, there are two subdomains with different governing equations. The solution within the metallic structure is governed by
\begin{align*}
\overline{\mcal{L}} &= 0,\quad \mbox{in } \overline{\Omega},\\
\bJ \cdot \bn & = 0,\quad \mbox{on }\partial \overline{\Omega},
\end{align*}
whose HDG discretization is given by \eqref{eq:hdg_hydro}. Note that the no electron spill-out condition enforces that the electric current at the metallic interface is tangential $\bJ_h = \bJ_h^t$.

Conversely, the response in the dielectric $\Omega$ is given by regular time-harmonic Maxwell's equations, namely
\begin{align*}
\nabla \times \bE - \bV &= 0 , \\
\nabla \times \bV - \omega^2\varepsilon_d \bE & = 0 ,
\end{align*}
The HDG discretization of the time-harmonic Maxwell's equations is described in detail in \cite{nguyen2011maxwell}. The boundaries of the surrounding dielectric medium $\Omega$ represent the farfield truncation of the infinite space, where radiation is imposed with the Silver-M\"{u}ller conditions, which are first order absorbing boundary conditions \cite{sommerfeld1949partial,mur1981absorbing}, namely 
\begin{equation}\label{eq:silvermuller}
(\bV- \bV_0)\times \bn -  i\omega\sqrt{\varepsilon_d}\,\bn\times(\bE-\bE_0)\times\bn = 0,\quad \mbox{on }\partial\Omega_{\rm{rad}}\,.
\end{equation}

Finally, we need to impose a compatibility condition to stitch the subdomains together. For any two elements $T^+,\,T^-$ that satisfy $T^+\cap T^- \in \partial \overline{\Omega}$, see Fig. \ref{fig:coupling} (right), we enforce continuity of the tangential component of the trace of the magnetic field $\llbracket \bn\times\widehat{\bV}_h \rrbracket =0$ at the interface. Furthermore, since the traces are single-valued across inter-element boundaries, the global degrees of freedom on the faces $F\in \partial T^-$ have two $\lbrace \widehat{\bE}_h\rbrace$ (resp. three $\lbrace\widehat{\bE}_h,\,\widehat{U}_h\rbrace$) components for $F \notin \partial \overline{\Omega}$ (resp. $F \in \partial \overline{\Omega}$). Thus, the assembly of the global matrix needs to account for the global compatibility condition and the different number of global components.

\section{Numerical results}\label{sec:res}
\subsection{Convergence test}
In this section, we perform numerical tests to examine the convergence and accuracy of the HDG method for the HM introduced above. To that end, we solve \eqref{eq:firstordersystemhydro} in a square domain $\overline{\Omega} = (0,\pi)^2$ with $\varepsilon_\infty = 2$. In addition, we set $\omega = \omega_p  =1$, $\gamma=0$ and $\beta^2  = 0.5$ and select boundary data $\bE_\partial$ and $\bn\cdot\bJ$ such that the problem has the following exact solution
\begin{alignat*}{2}
 \bE &= (\cos x - i\sin y,\cos y - i\sin x),\qquad \quad \bV  = i\cos y-i\cos x,\\
 \bJ &= (\sin y + 2i\cos x,\sin x + 2i\cos y),\qquad  \uprho  = -2\sin x - 2\sin y  .
\end{alignat*}
The stabilization parameters are set according to the values proposed above, that is $\tau_t = \tau_n = \sqrt{2}$. We  analyze the convergence of the method on a sequence of structured triangular meshes $\mcal{T}_h$ with $n^2/2$ elements by computing the $\bm L^2(\mcal{T}_h),\,\bm H^{\curl}(\mcal{T}_h)$ and $\bm H^{\dive}(\mcal{T}_h)$ norm of the errors for the above variables.

\begin{table}[h!]
\footnotesize
\begin{center}
  \renewcommand{\arraystretch}{1.2}
\begin{tabular}{cc|cccccccccc}
 &   & \multicolumn{2}{c}{$\norm{\bE- \bE_h}_{\bm L^2}$} & \multicolumn{2}{c}{$\norm{\bE- \bE_h}_{\bm H^{\curl}}$} &\multicolumn{2}{c}{$\norm{\bJ- \bJ_h}_{\bm L^2}$} &\multicolumn{2}{c}{$\norm{\bJ- \bJ_h}_{\bm H^{\dive}}$}&\multicolumn{2}{c}{$\norm{\uprho- \uprho_h}_{\bm L^2}$}\\
$p$& $n$  & Error& Order& Error& Order& Error& Order & Error& Order & Error& Order\\
[1mm]
1 & 8 & 3.6e-2 & --   & 3.8e-1 & --   & 6.7e-2 & --   & 1.1e0 & --   & 3.9e-2 & -- \\
  &16 & 8.6e-3 & 2.06 & 1.9e-1 & 1.02 & 1.5e-2 & 2.15 & 5.3e-1 & 1.07 & 5.5e-3 & 2.84 \\
  &32 & 2.1e-3 & 2.02 & 9.5e-2 & 1.00 & 3.6e-3 & 2.05 & 2.6e-1 & 1.03 & 9.3e-4 & 2.56 \\
  &64 & 5.3e-4 & 2.01 & 4.7e-2 & 1.00 & 8.9e-4 & 2.02 & 1.3e-1 & 1.01 & 2.0e-4 & 2.24 \\
  [2mm]
2 & 8 & 1.1e-3 & --   & 1.9e-2 & --   & 1.8e-3 & --   & 5.5e-2 & -- & 4.7e-4 & -- \\
  &16 & 1.3e-4 & 3.04 & 4.7e-3 & 2.02 & 2.2e-4 & 3.07 & 1.3e-2 & 2.04 & 5.6e-5 & 3.08 \\
  &32 & 1.6e-5 & 3.02 & 1.2e-3 & 2.01 & 2.7e-5 & 3.03 & 3.3e-3 & 2.02 & 6.9e-6 & 3.02 \\
  &64 & 2.0e-6 & 3.01 & 2.9e-4 & 2.00 & 3.3e-6 & 3.01 & 8.2e-4 & 2.01 &  8.6e-7& 3.00 \\
  [2mm]
  3 & 8 & 2.7e-5 & --   & 6.8e-4 & --   & 4.7e-5 & --   & 2.1e-3 & --   & 1.3e-5 & -- \\
  &16 & 1.7e-6 & 4.01 & 8.5e-5 & 3.01 & 2.9e-6 & 4.05 & 2.6e-4 & 3.04 & 7.9e-7 & 4.01 \\
  &32 & 1.1e-7 & 4.00 & 1.1e-5 & 3.00 & 1.8e-7 & 4.02 & 3.2e-5 & 3.02 & 4.9e-8 & 4.00 \\
  &64 & 6.6e-9 & 4.00 & 1.3e-6 & 3.00 & 1.1e-8 & 4.01 & 4.0e-6 & 3.01 & 3.1e-9 & 4.00 \\
  
\end{tabular}
\end{center}
\caption{History of convergence for the approximate solution.}\label{tab:hdghydro}
\end{table}

We consider polynomials of degree $p = 1,\,2$ and $3$ to represent the solution, and present the results in Table \ref{tab:hdghydro} for the approximate solutions and in Table \ref{tab:hdghydro_post} for the postprocessed solutions. We observe that both the electric field, electric current and induced free charge converge at the optimal rate of $\mcal{O}(h^{p+1})$ in the $\bm L^2(\mcal{T}_h)$-norm, whereas the electric field (resp. electric current) converges at the rate of $\mcal{O}(h^{p})$ in the $\bH^{\curl}(\mcal{T}_h)$-norm (resp. $\bH^{\dive}(\mcal{T}_h)$-norm).

\begin{table}[h!]
\footnotesize
\begin{center}
  \renewcommand{\arraystretch}{1.2}
\begin{tabular}{cc|cccccccccc}
 &   & \multicolumn{2}{c}{$\norm{\bE- \bE^*_h}_{\bm L^2}$} & \multicolumn{2}{c}{$\norm{\bE- \bE^*_h}_{\bm H^{\curl}}$} &\multicolumn{2}{c}{$\norm{\bJ- \bJ^*_h}_{\bm L^2}$} &\multicolumn{2}{c}{$\norm{\bJ- \bJ^*_h}_{\bm H^{\dive}}$}&\multicolumn{2}{c}{$\norm{\uprho- \uprho^*_h}_{\bm L^2}$}\\
$p$& $n$  & Error& Order& Error& Order& Error& Order & Error& Order & Error& Order\\
[1mm]
1 & 8 & 3.9e-2 & --   & 4.6e-2 & --   & 6.9e-2 & --   & 8.5e-2 & --   & 3.8e-2 & -- \\
  &16 & 9.4e-3 & 2.06 & 1.1e-2 & 2.10 & 1.5e-2 & 2.15 & 1.8e-2 & 2.22 & 4.7e-3 & 3.02 \\
  &32 & 2.3e-3 & 2.02 & 2.6e-3 & 2.03 & 3.7e-3 & 2.05 & 4.3e-3 & 2.07 & 5.9e-4 & 3.01 \\
  &64 & 5.8e-4 & 2.01 & 6.5e-4 & 2.01 & 9.1e-4 & 2.02 & 1.1e-3 & 2.02 & 7.3e-5 & 3.01 \\
[2mm]
2 & 8 & 1.1e-3 & --   & 1.3e-3 & --   & 1.7e-3 & --   & 2.0e-3 & -- & 1.7e-4 & -- \\
  &16 & 1.4e-4 & 3.04 & 1.6e-4 & 3.03 & 2.0e-4 & 3.07 & 2.4e-4 & 3.06 & 8.6e-6 & 4.31 \\
  &32 & 1.7e-5 & 3.02 & 2.0e-5 & 3.01 & 2.5e-5 & 3.03 & 3.0e-5 & 3.02 & 4.9e-7 & 4.12 \\
  &64 & 2.1e-6 & 3.01 & 2.4e-6 & 3.00 & 3.1e-6 & 3.01 & 3.7e-6 & 3.01 &  3.0e-8& 4.05 \\
[2mm]
  3 & 8 & 2.9e-5 & --   & 3.2e-5 & --   & 4.5e-5 & --   & 5.2e-5 & --   & 2.9e-6 & -- \\
  &16 & 1.8e-6 & 4.01 & 2.0e-6 & 4.01 & 2.7e-6 & 4.04 & 3.2e-6 & 4.03 & 8.3e-8 & 5.12 \\
  &32 & 1.1e-7 & 4.01 & 1.3e-7 & 4.00 & 1.7e-7 & 4.02 & 2.0e-7 & 4.01 & 2.5e-9 & 5.05 \\
  &64 & 6.9e-9 & 4.00 & 7.8e-9 & 4.00 & 1.0e-8 & 4.01 & 1.2e-8 & 4.01 & 7.7e-11 & 5.02 \\
\end{tabular}
\end{center}
\caption{History of convergence for the postprocessed solution.}\label{tab:hdghydro_post}
\end{table}

Nonetheless, the local postprocessing described above recovers an additional order of convergence $p+1$ on the $\bm H^{\curl}(\mcal{T}_h)$-norm for the electric field and on the $\bm H^{\dive}(\mcal{T}_h)$-norm for the electric current, as well as an optimal convergence rate of $p+2$ for the induced free charge $\uprho$.

\subsection{Single cylindrical nanowire}
In order to show the differences between the LRA and the HM, we consider a golden nanowire of diameter $D$ in free space. We assume the nanowire is infinite in the $z$ direction, and is excited by an $x$-polarized electric field propagating the $y$-direction, that is $\bE_0 = \exp(i\omega y)\hat{\bm x}$, see Fig. \ref{fig:nanowire}. For this simple geometry, the analytical solution is available for both the LRA and the HM using Bessel and Hankel functions \cite{ruppin2001extinction}, and is useful to illustrate the physics captured by both models. The quantity of interest is the extinction cross section
\begin{equation*}
 \ext = -\dfrac{1}{D\abs{\bE_0}^2}\int_A \Re\lb\bE_0\times\bH^* + \bE\times\bH^*_0\rb\cdot d{\bf A}
\end{equation*}
where $A$ is an arbitrary area enclosing the wire. Results are computed for both local and nonlocal models, with diameters 4 and 40 nm. The values for the gold constants are $\varepsilon_\infty = 1$, $\hslash\overline{\omega}_p = 9.02$ eV and $\hslash \overline{\gamma} = 0.071$ eV \cite{johnson1972optical}, where $\hslash = h/2\pi$ is the reduced Planck constant, and $\widebar{v}_F = 1.39\cdot 10^6$ m/s \cite{ashcroft2005solid}.

 \begin{figure}[h!]
\centering
\subcaptionbox{\label{fig:nanowire}}
[6cm]{\includegraphics[scale = 0.4]{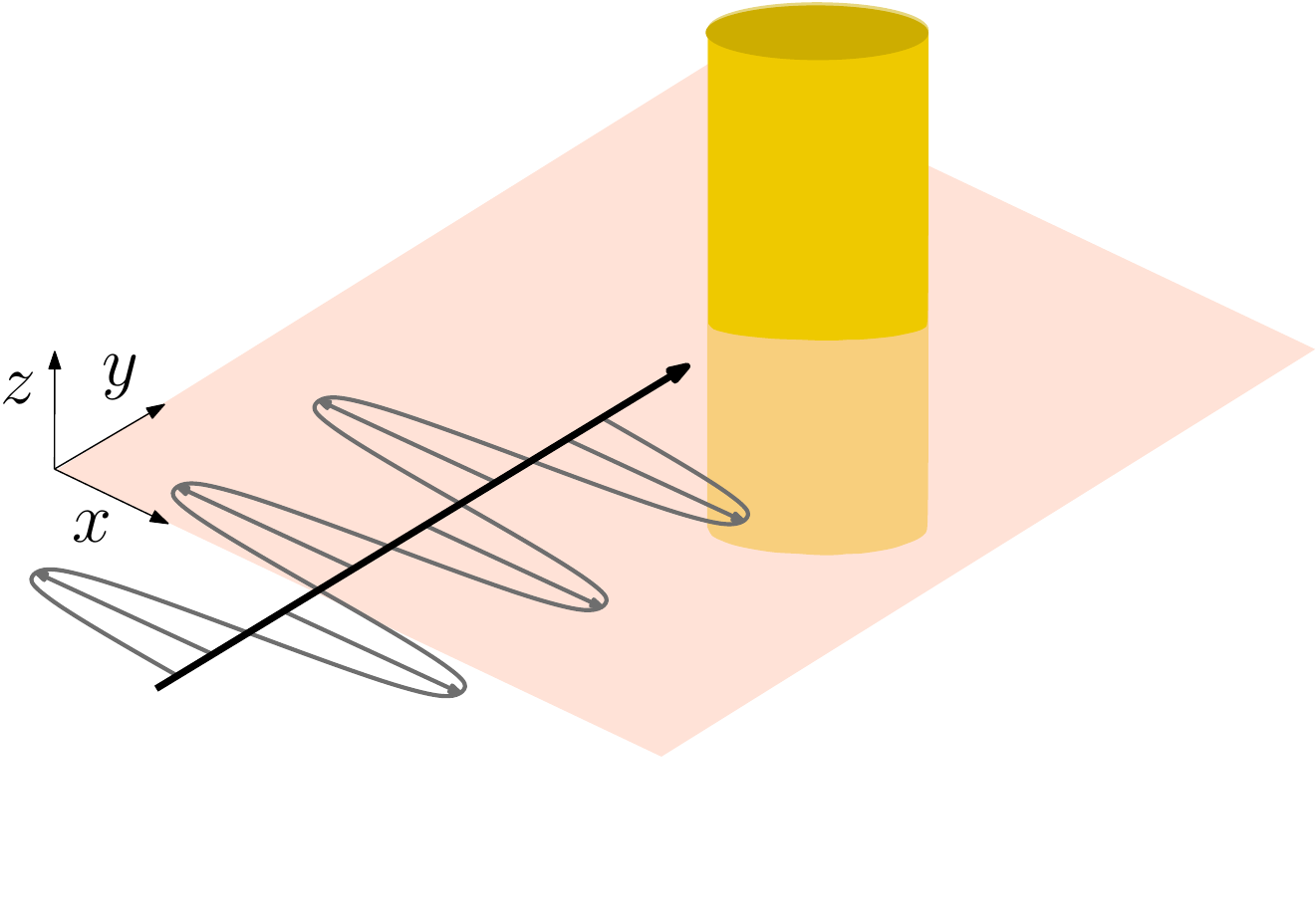}}
\hfill\subcaptionbox{\label{fig:wiremesh}}
[8.75cm]{\includegraphics[scale = 0.45]{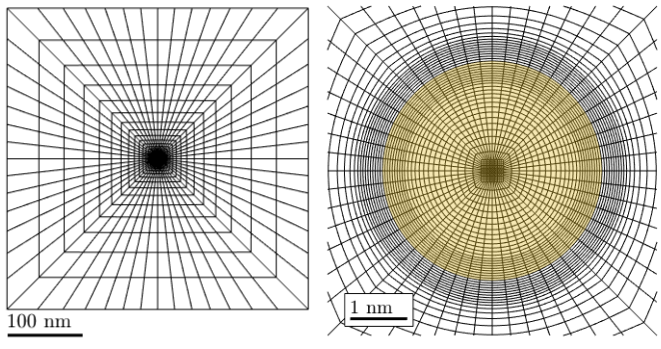}}
\caption{(a) Schematic diagram of single nanowire under plane wave illumination. (b) Two views of the cubic discretization, with gold nanowire highlighted.}
\end{figure}
For this simulation, we set the computational domain to be a square of 0.4 \mm $\times$ 0.4 \mm, and prescribe Silver-M\"{u}ller conditions on the  boundaries. The size of the computational domain is chosen such that the location of the radiating boundaries is far enough so that it has no significant effect on the solution. The domain is discretized with an anisotropic mesh of 3600 cubic quadrangular elements, ensuring that greater resolution is achieved near the nanoparticle, see Fig. \ref{fig:wiremesh}, with element sizes ranging from 50 nm to 0.05 nm. The theoretical results given in \cite{ruppin2001extinction} are visually indistinguishable from the numerical ones, with relative errors below 1\% for all frequencies.

\begin{figure}[h!]
 \centering
 \includegraphics[scale = .4]{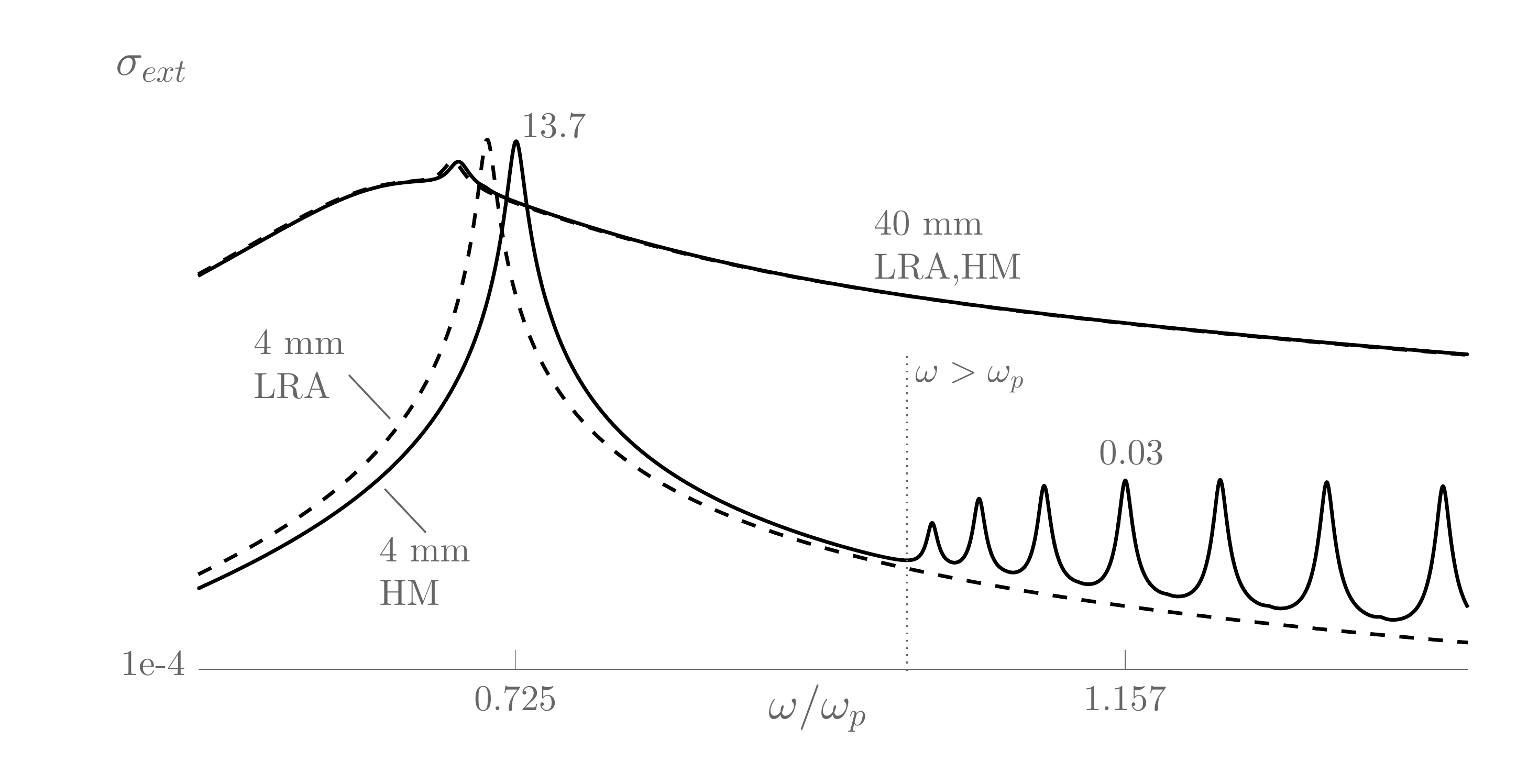}
 \caption{Extinction cross section $\ext$ (logscale) of gold nanowire with diameters 4 and 40 nm for LRA and HM. Nonlocal effects modeled with the HM are only relevant at nanometric scales, blue-shifting the main resonance and exciting volume modes above the plasma frequency.}\label{fig:nanowiresigma}
\end{figure}


As anticipated, for small metallic nanoparticles the effects of the hydrodynamic current are significant, causing not only a blue-shift of around 3\% in the main resonance, but also a sequence of resonances above the plasma frequency that are not excited with the local model, see Fig. \ref{fig:nanowiresigma}. These excitations correspond to volume plasmon states, which are confined longitudinal oscillations of the electron gas. It can be shown \cite{ciraci2013hydrodynamic} that below the plasma frequency both the transverse and the longitudinal modes decay exponentially, whereas above the plasma frequency both modes propagate. As a matter of fact, it is the propagation of the longitudinal modes that causes the additional resonances shown in Fig. \ref{fig:nanowiresigma} for $\omega>\omega_p$. Conversely, the more simplistic local model only allows a longitudinal mode at the plasma frequency.

Furthermore, results for the 40 nm wire show that the hydrodynamic model predicts a response very similar to that of the LRA. Hence, including the hydrodynamic pressure term is only relevant for nanometric geometries.

 \begin{figure}[h!]
\subcaptionbox{\label{fig:r2local}}
[4.5cm]{\includegraphics[scale = 0.3]{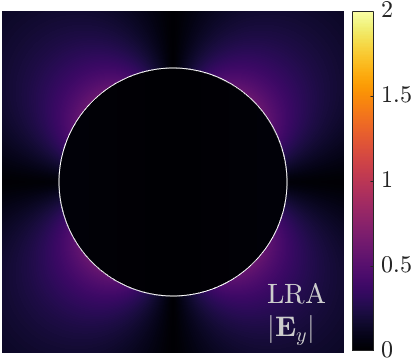}}
\hfill\subcaptionbox{\label{fig:r2nonlocal}}
[4.5cm]{\includegraphics[scale = 0.3]{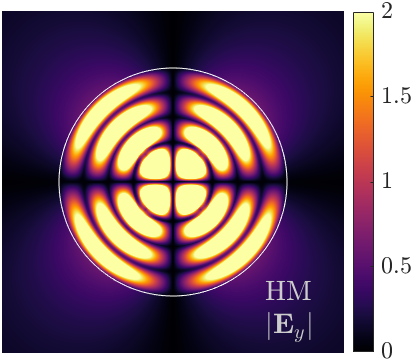}}
\hfill\subcaptionbox{\label{fig:r2J}}
[4.5cm]{\includegraphics[scale = 0.3]{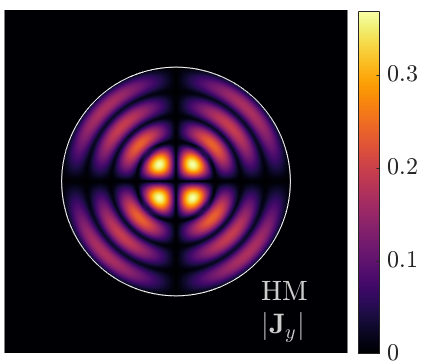}}
\caption{Solution fields for 4 nm gold wire at $\omega/\omega_p = 1.157$, with boundary highlighted.}\label{fig:nanowiresol}
\end{figure}

We shall now inspect the solutions of the EM field for the 4 nm wire. The inclusion of the electron pressure term excites features that occur at the sub-Fermi-wavelength scale. This wavelength is associated to the Fermi energy --the maximum energy of the electrons in the metal-- and is typically much smaller than the length scale of the problem. Indeed, the field $\abs{{\bf E}_y}$ for both models at the resonant frequency $\omega/\omega_p = 1.157$ shown in Fig. \ref{fig:nanowiresol} illustrates this phenomenon. Even though the solution outside the metal is similar, modeling the hydrodynamic current results in wave patterns inside the nanowire of wavelength 100 times smaller than the wavelength of the incident field, see $\abs{{\bf E}_y}$ in Figs \ref{fig:r2nonlocal} and $\abs{{\bf J}_y}$ in \ref{fig:r2J}, due to the excitation of a longitudinal plasmon. Consequently, to properly capture the nonlocal effects predicted by the HM we require significantly finer discretizations in the metallic structure.

\subsection{3D periodic annular nanogap}
We now consider a 3D structure, the periodic annular nanogap, which has been shown to produce extraordinary optical transmission and enormous field enhancements \cite{rodrigo2016extraordinary,park2015nanogap,yoo2016high}. These structures consist of periodic arrays of subwavelength annular apertures of a dielectric material patterned in a metallic film, and unlike arrays of circular and rectangular apertures they sustain plasmon resonances for a broad range of frequency regimes. That is, for a fixed gap size one can adjust the ring diameter and the array periodicity to generate resonances for the visible, the mid infrared (MIR), the far infrared (FIR) regime and the terahertz (THz) regime. 

Researchers have demonstrated high-throughput fabrication schemes to make nanometer-wide annular gaps with perimeters of microns to millimeters \cite{park2015nanogap,yoo2016high,chen2013atomic,im2010vertically,chen2014squeezing}. Such resonant nanogap structures have been used for plasmonic sensing applications as well as fundamental studies of nanophotonics phenomena. These technological advances motivate fast numerical modeling of such extreme-scale 3D structures, consisting of sub-10 nm-gap annular apertures with micron- to millimeter-scale diameters.



\begin{figure}[h!]
\centering
\subcaptionbox{\label{fig:annularDiameter}}
[15cm]{\includegraphics[scale = 0.185]{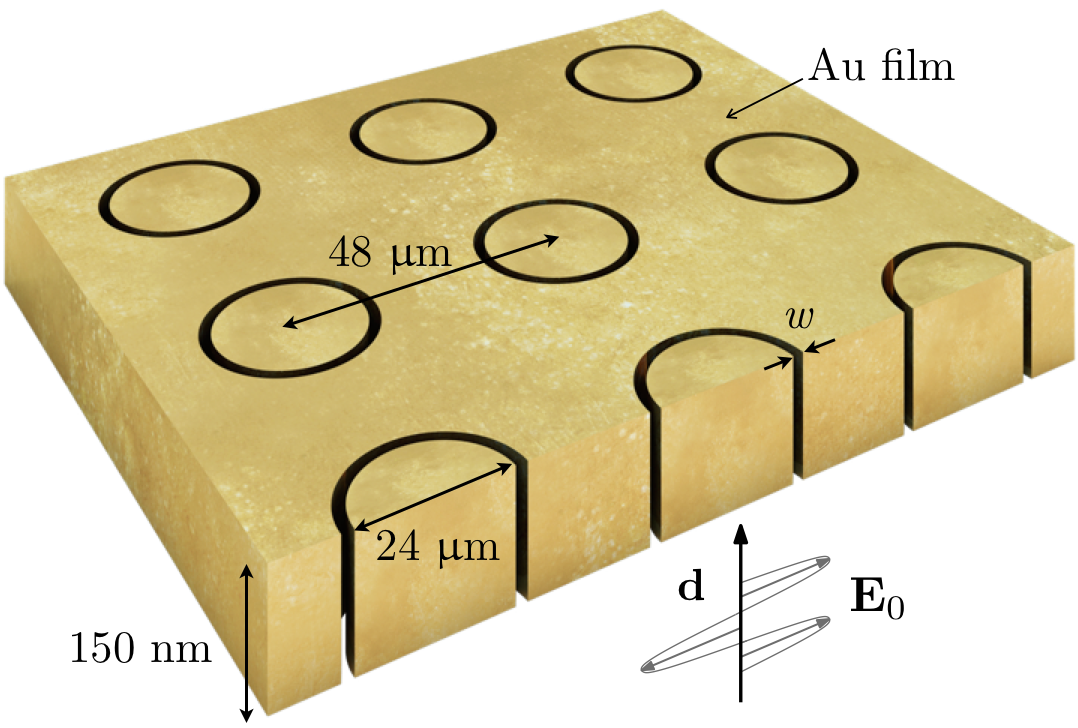}}
\hfill\subcaptionbox{\label{fig:annularFIR}}
[8cm]{\includegraphics[scale = 0.48]{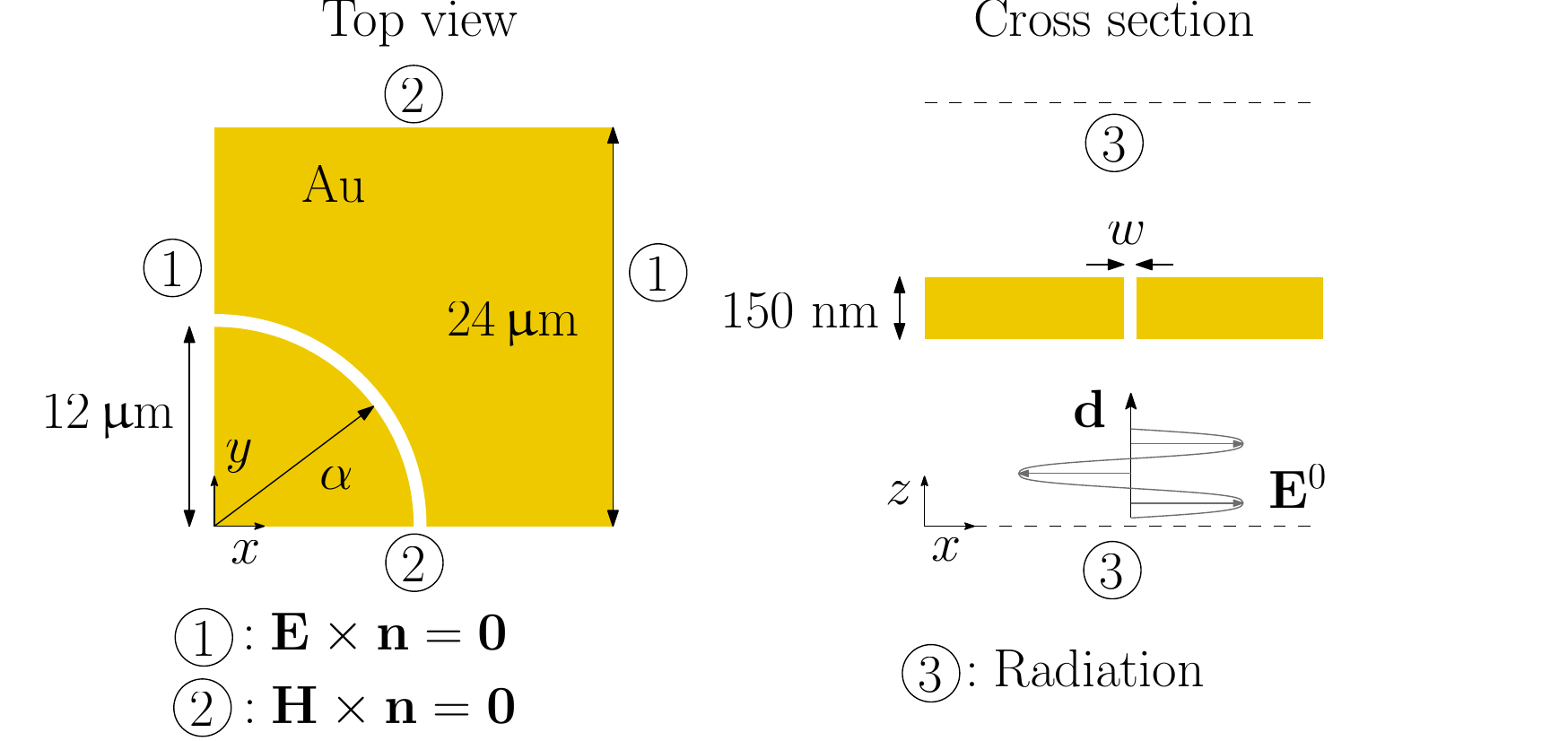}}
\hfill\subcaptionbox{\label{fig:detail_mesh}}
[7cm]{\includegraphics[scale = 0.48]{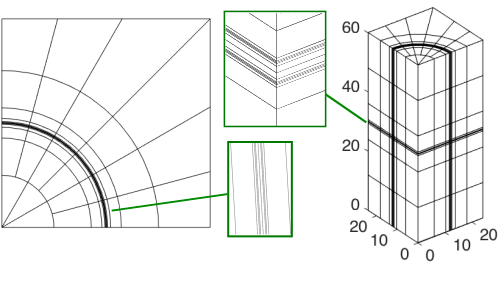}}
\caption{(a) Schematic of periodic array of annular gaps with relevant dimensions. (b)  Top and cross section view of unit computational cell for periodic annular array. (c)  3D high-order mesh and 2D slice (with details) used in calculations.}
\end{figure}

The structure that will be analyzed is a gold thin-film with annular nanogaps arranged according to the symmetries of the square, see Fig. \ref{fig:annularDiameter}. In order to focus only on the impact of the metal, we shall assume the film is suspended in free space (no substrate), and that there is no material filling the nanometer-wide gap. Although this structure cannot be manufactured, it is of interest to achieve a deeper understanding of the ring structure from a theoretical perspective. The structure is illuminated from below with an $x$-polarized plane wave ${\bf E}_0 = \exp(i\omega z)\hat{\bm x}$ with frequencies in the low THz regime. Note that the problem may be further reduced by exploiting the symmetries of the lattice, hence we only need to solve for one quadrant of the ring structure as indicated in Fig. \ref{fig:annularFIR}. Symmetry is enforced, for an $x$-polarized plane wave, by imposing $\bE\times\bn = \bm 0$ on the $x$-constant  boundaries and $\bH\times\bn = \bm 0$ on the $y$-constant  boundaries. Radiation conditions \eqref{eq:silvermuller} are imposed on the $z$-constant boundaries.

We consider aperture widths $w$ ranging from 0.5 nm to 100 nm, for frequencies between 0.2 THz and 5.5 THz, and investigate the response using the distinct models for light-metal interaction introduced above. More specifically, the outputs monitored are the transmitted power $\pow$ through the structure and the enhancement $\pi$ of the $x$-component of the electric field in the gap volume, computed as
\begin{equation}
  \pow = \frac{\int_{A_1} \abs{\Re \lb \bE \times \bH^*\rb \cdot \bn} d{ A}}{\int_{A_0} \abs{\Re \lb \bE_0 \times \bH_0^*\rb \cdot \bn} d{ A}}\,,\qquad \qquad
\fie = \frac{\int_{\textnormal{gap}} \abs{\bE_x} dV}{\int_{\textnormal{gap}} \abs{\bE_{0,x}} dV}\,, 
\end{equation}
where $A_0$ is an arbitrary $xy$ plane below the gold film and $A_1$ an arbitrary $xy$ plane above the gold film. In this frequency regime, the 3D periodic annular nanogap excites resonances whose electric field's $x$-component is constant along the aperture, thus we focus only on the enhancement of this component.
 
The discretization consists of 1.8K hexahedral cubic elements, and is constructed by extruding in the $z$-direction the 2D curved mesh in Fig. \ref{fig:detail_mesh}, with the inset showing the concentration of elements in the vicinity of the gap. In addition, we also present the entire 3D mesh, along with an inset showing a zoom of the gold film region. The hexahedral elements in the vertical direction are smaller close to the upper and lower surfaces of the gold film, and gradually increase as we separate from the metal. The radiation conditions are prescribed at 30 microns for the glass substrate and 30 microns for air, ensuring there is no numerical interaction between the boundary and the extraordinary optical transmission that occurs in the ring. This highly anisotropic mesh allows us to solve for the full 3D EM wave field using a reduced number of degrees of freedom. The numerical accuracy is verified by carrying out grid convergence studies on consecutively refined meshes, until the relative error for the field enhancement of the smallest gap is below 0.1\%. We then perform, for each electron model and gapsize under consideration, 5000 3d HDG simulations at different frequencies within the interval of interest. These frequency sweeps give rise to the $\pi$ and $\varsigma$ profiles presented in Fig. \ref{fig:ringTheory}, and enable the tracking of the resonance for each case.


\begin{figure}[h!]
\centering
\subcaptionbox{\label{fig:ringTheory_PEC}}
[7.5cm]{\includegraphics[scale = 0.35]{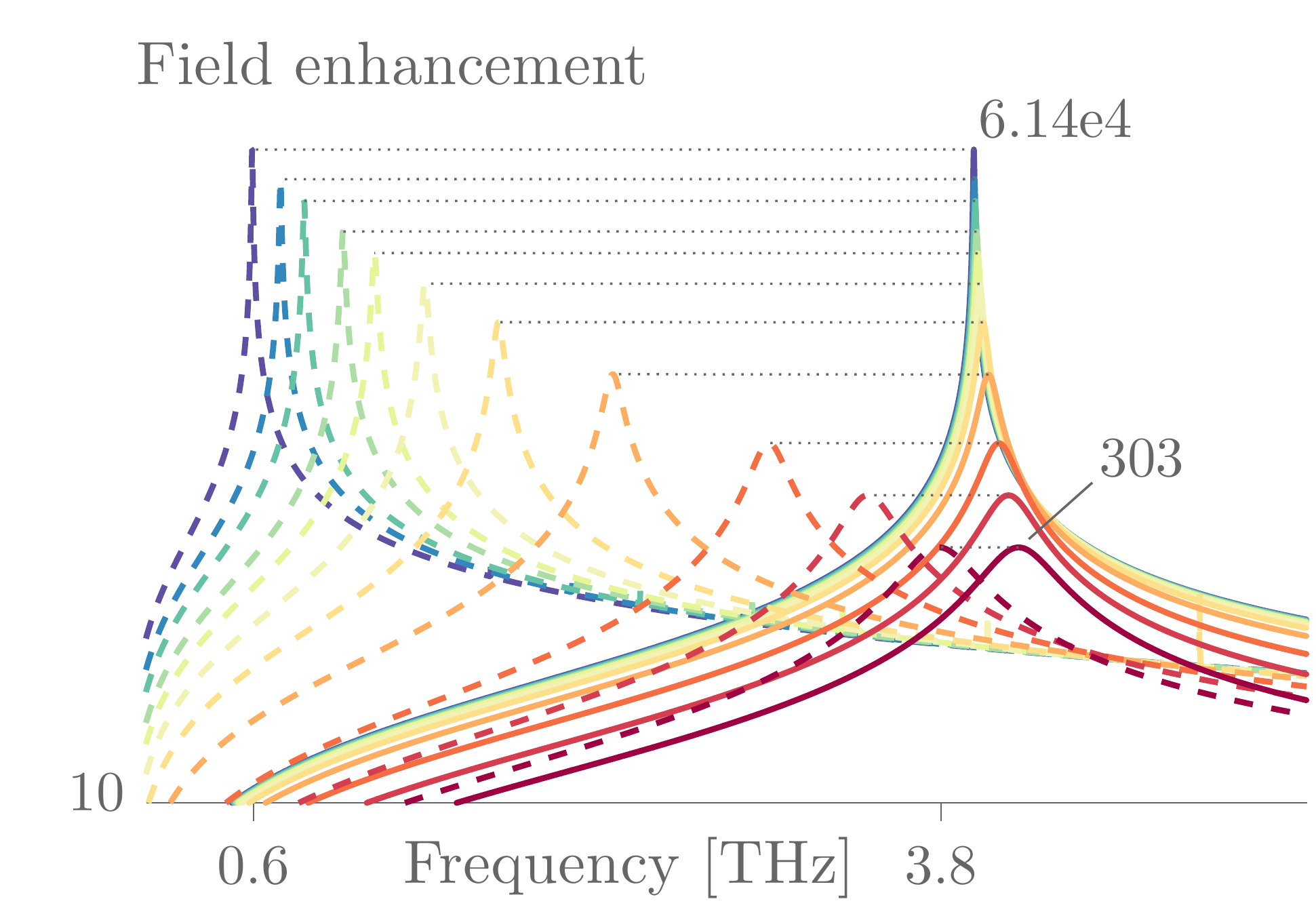}}
\hfill\subcaptionbox{\label{fig:ringFE_theory}}
[7.5cm]{\includegraphics[scale = 0.35]{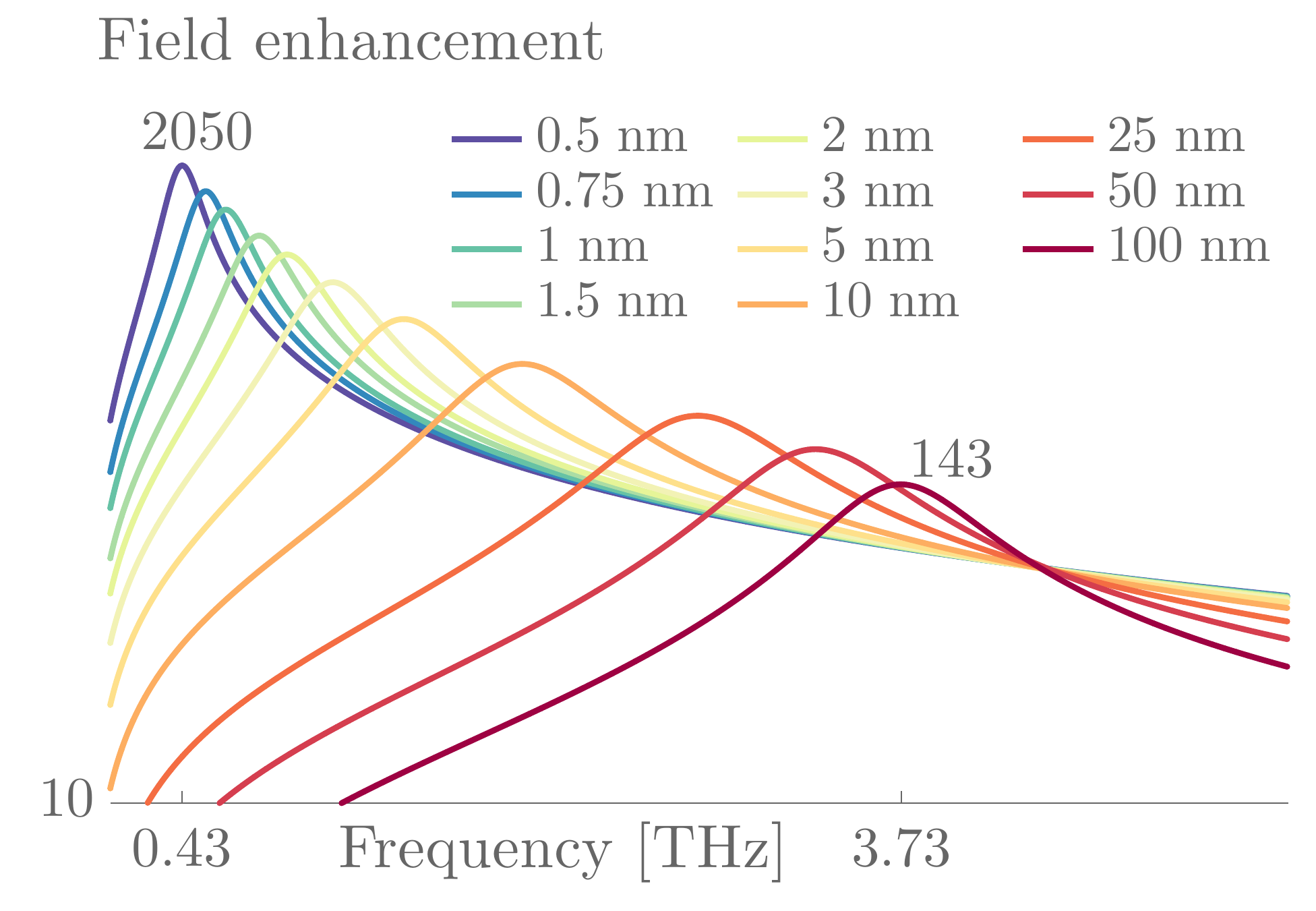}}
\hfill\subcaptionbox{\label{fig:ringTP_theory}}
[7.5cm]{\includegraphics[scale = 0.35]{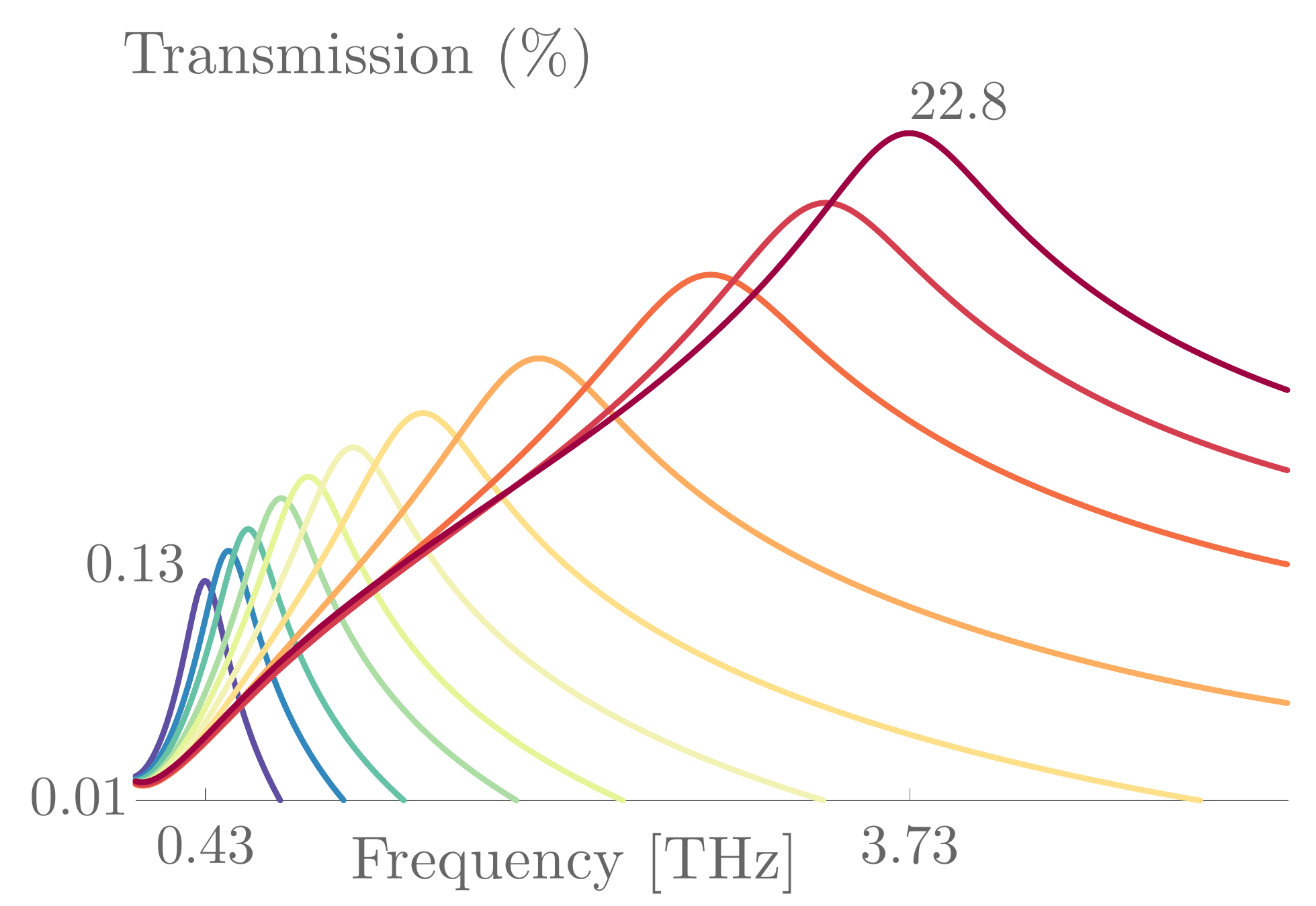}}
\hfill\subcaptionbox{\label{fig:ringATP_theory}}
[7.5cm]{\includegraphics[scale = 0.35]{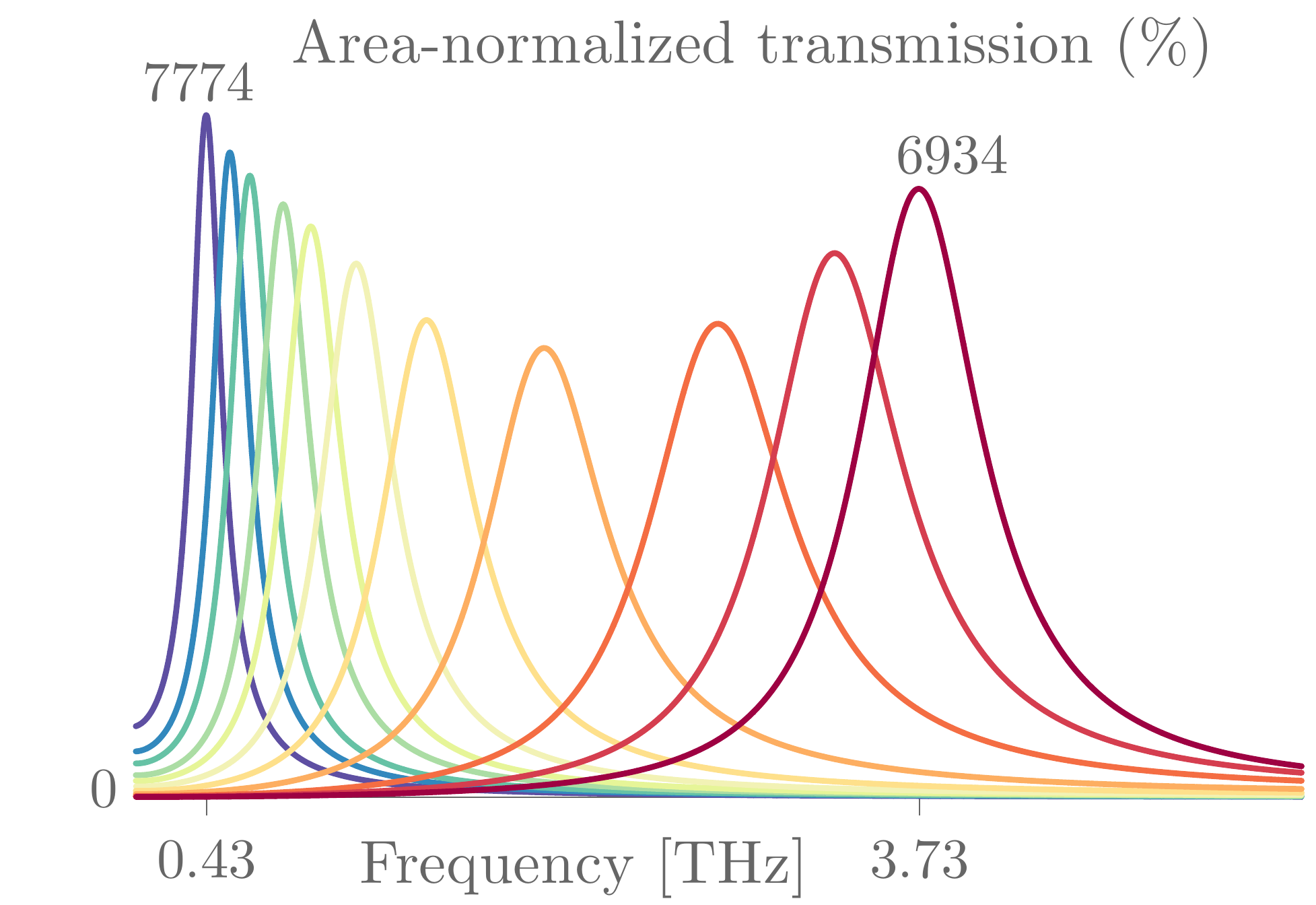}}
\caption{(a) Field enhancement (logscale) for perfect electric conductor (solid) and undamped Drude (dashed). (b) Field enhancement (logscale) for damped Drude. (c) Transmission (logscale) for damped Drude. (d) Area-normalized transmission for damped Drude. Legend is shared across all subfigures.}\label{fig:ringTheory}
\end{figure}


The simplest model assumes the gold film behaves as a perfect conductor with infinite conductivity. Prescribing perfect electric conductor ($\bE\times\bn = {\bf 0}$) conditions at the metal interface ensures the electric field is reflected at the metal boundary and no penetration is allowed. The field enhancement profile is presented in Fig. \ref{fig:ringTheory_PEC} with solid lines, exhibiting sharp peaks and enormous enhancements across gapsizes, showing that smaller gaps lead to larger resonances. This response corresponds to that of an undamped oscillator, which differs significantly to what has been observed experimentally for arrays of annular nanogaps \cite{baida2002light}. Quite interestingly, this unrealistic behavior may also be observed with the undamped Drude model ($\gamma = 0$). The field enhancement curves for this case, using $\hslash \overline{\omega}_p = 9.02$ eV  and $\varepsilon_\infty = 1$ adopted from Ordal \etal \cite{ordal1983optical,ordal1985optical}, are also depicted in Fig. \ref{fig:ringTheory_PEC} with dashed lines. We note that the maximum enhancement attained with undamped Drude and with PEC models is identical for a given gap size. Hence, the collision rate plays a pivotal role in the accurate characterization of the electromagnetic response through Drude's permittivity, since it is responsible for the imaginary component that models losses in the metals.

Secondly, we introduce damping in the Drude model with $\hslash \overline{\gamma} = 0.02678$ eV given by \cite{ordal1983optical,ordal1985optical}, otherwise known as the LRA. The losses introduced by a nonzero damping lead to lower field enhancements and broader resonances, see Fig. \ref{fig:ringFE_theory}, in comparison with both PEC and undamped Drude in Fig. \ref{fig:ringTheory_PEC}. Among distinct gap widths, these profiles are qualitatively similar, although smaller apertures lead to stronger field localizations and narrower resonance peaks. 

The metal is an opaque lossy medium, thus higher transmission rates are expected for wider gaps since light is only transmitted through the aperture in the metal, see Fig. \ref{fig:ringTP_theory}. In order to balance the extraordinary optical transmission among gapsizes,  transmission is normalized by the open area ratio $A_{\small w}/(A_{\small w} + A_{\scriptsize\mbox{Au}})$, see Fig. \ref{fig:ringATP_theory}. For instance, the annular 0.5 nm gap transmits a maximum of $0.13\%$ of incident light through an open area of $0.0016\%$, for an area-normalized transmission of $7774\%$, whereas the annular 100 nm gap is able to transmit $23\%$ of the incoming light through a wider open area of $0.33\%$, giving an area-normalized transmission of $6934\%$. Indeed, the normalized transmission for nanometric and sub-nanometric gaps is superior to that of nanogaps 100 times wider, as a consequence of the extreme amplification of the incident EM field that occurs for deep-subwavelength apertures.

\begin{figure}[h!]
\subcaptionbox{Relative blue-shift \label{fig:shiftTheory}}
[7.5cm]{\includegraphics[scale = 0.35]{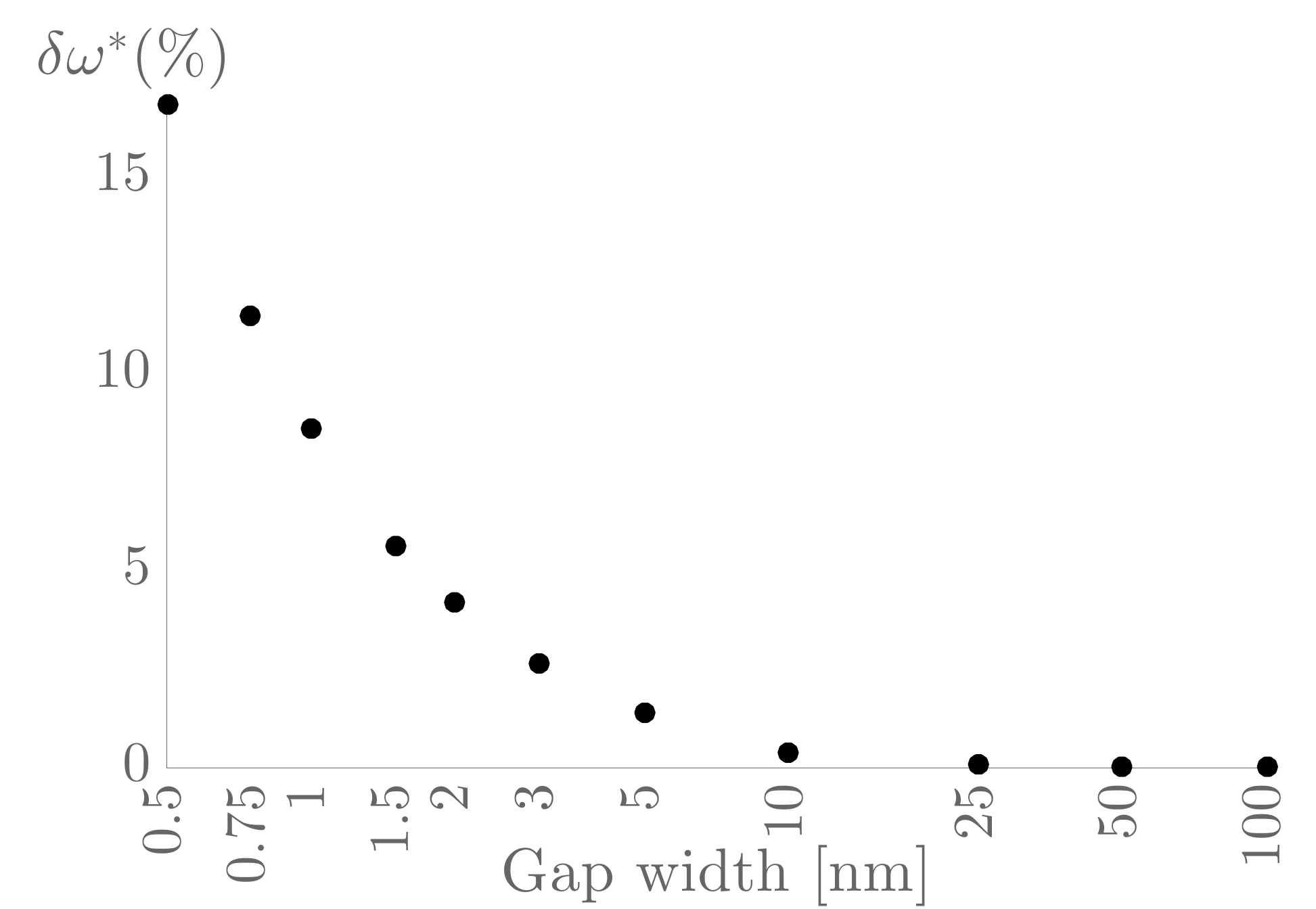}} 
\subcaptionbox{Ratios in field enhancement and transmission\label{fig:decayTheory}}
[7.5cm]{\includegraphics[scale = 0.35]{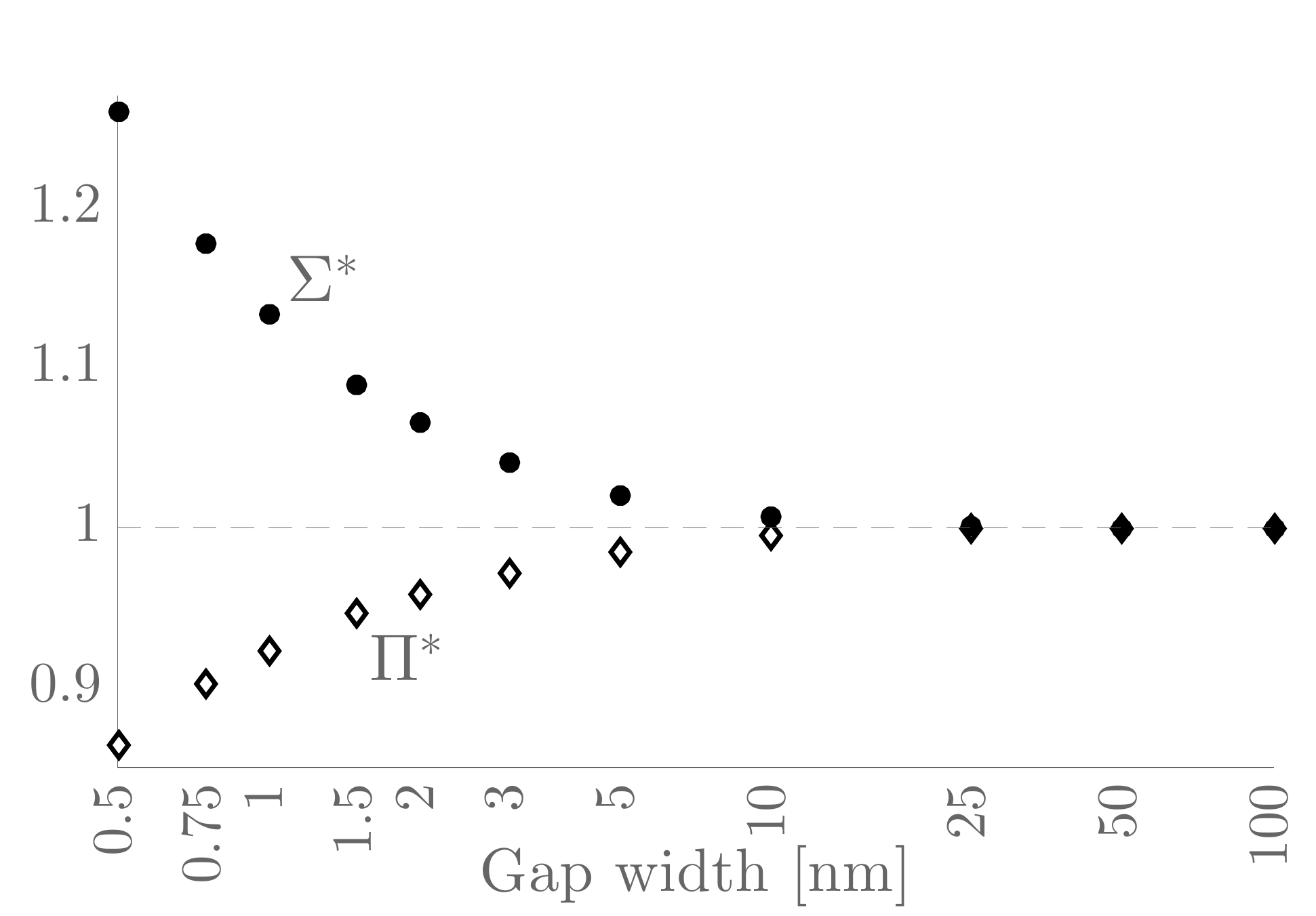}}\caption{Nonlocal effects are noticeable only at scales below 10 nm.}\label{fig:HMring}
\end{figure}

Finally, we extend the study above with the hydrodynamic model using $\widebar{v}_F = 1.39\cdot 10^6$ m/s. The nonlocal model for electron interaction leads to spectral changes that heavily depend on the gap width. The field enhancement and transmitted power profiles are qualitatively similar to those of the LRA in Figs. \ref{fig:ringFE_theory}-\ref{fig:ringATP_theory}, although quantitative discrepancies arise as we explore gaps below tenths of nanometers. To quantify the impact of the hydrodynamic model for the periodic annular nanogap, we evaluate the relative blue-shift $\delta\omega^* = (\omega^*_{HM}-\omega^*_{LRA})/\omega^*_{LRA}$ in the resonant frequency $\omega^*$, as well as the ratios of maximum field enhancement $\Pi^* = \fie^*_{HM}/\fie^*_{LRA}$  and maximum transmission $\Sigma^*= \pow^*_{HM}/\pow^*_{LRA}$, for multiple gap widths in Fig. \ref{fig:HMring}. Certainly, smaller gaps exhibit large shifts, even beyond 15\% for sub-nanometric widths, whereas the spectral response for gaps above 10 nm remains unchanged.

\begin{figure}[h!]
\centering
\includegraphics[scale = 0.35]{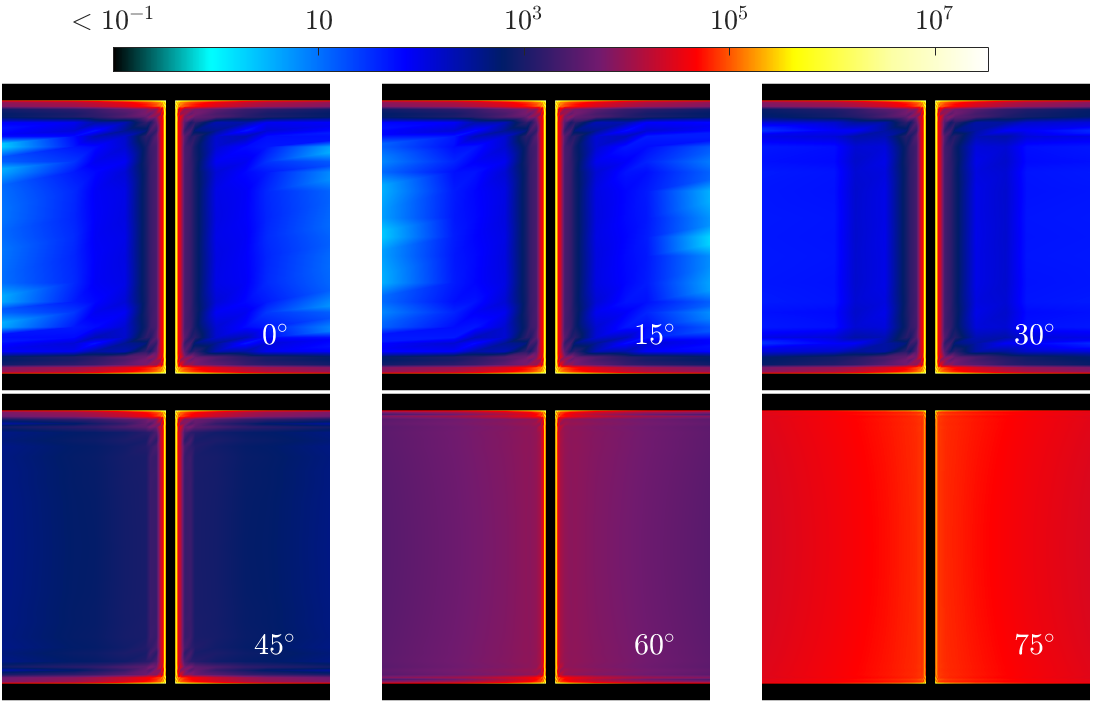}
\caption{Cross section view at several angles for 5 nm gap structure of solution field $\abs{\uprho_h}$ at the resonant frequency 1.45 THz, shown in logarithmic scale.}\label{fig:ringRho}
\end{figure}

These changes are a consequence of the spreading of the electron density at the metal interface explained above. For noble metals, such as gold, the smoothed profile of induced electron density causes an effective enlargement of the aperture seen by the incident EM wave. Larger effective gaps lead to resonance shifts towards the blue end of the spectrum, along with a decay in the maximum field enhancement (less confinement) and increment in maximum transmission (wider gap region). In Fig. \ref{fig:ringRho}, we inspect the induced charge density $\abs{\uprho}$ in the cross section of a 5 nm annular gap for several angular slices  $y/x = \tan \alpha$, specified in Fig. \ref{fig:annularFIR}. We observe for $\alpha = 0^\circ$ a boundary-layer pattern, with a maximum value at the interface and a decay of five orders of magnitude just a few nanometers away from the aperture. These two features gradually decrease as we move from the $y$-constant symmetry plane to the $x$-constant symmetry plane. Indeed, for $\alpha = 75^\circ$ the charge density profile is almost constant in the interior of the metal. Conversely, for the local model $\abs{\uprho}$ is infinitely localized at the gold surface, thus allowing less EM wave penetration in the metal.

These effects have been observed for nanoparticles and plasmonic dimers \cite{garcia2008nonlocal,raza2013blueshift,raza2015review,ciraci2012probing,ciraci2014film}, but have never been reported for neither annular structures nor at low THz frequencies. These results motivate the need to account for the hydrodynamic pressure in the simulation of realistic 3D plasmonic structures, since the nonlocal effects do have a substantial impact on the performance of the device for shrinking nanogaps.

\section{Conclusions} \label{sec:conc}
In this paper, we have presented a hybridizable discontinuous Galerkin method to simulate the propagation of electromagnetic waves for metal-dielectric media at the nanoscale. Simulation of plasmonic phenomena is inherently complex due to the enormous disparity in length scales and the extreme localization of electromagnetic fields that can be observed as a consequence of the collective excitation of electrons. The HDG method for Maxwell's equations, and the extension to the hydrodynamic model for metals are well-suited to the numerical simulation of plasmonic devices, due to its ability to handle complex geometries through anisotropic unstructured meshes, the efficient treatment of material interfaces and the possibility of solving reduced linear systems that only involve the degrees of freedom at the faces of the discretization. 


\section*{Acknowledgements}
F.~V.-C., N.~C.~N and J.~P. acknowledge support from the AFOSR Grant No. FA9550-11-1-0141 and the AFOSR Grant No. FA9550-12-0357. S.-H.~O. acknowledges support from the NSF Grant No. ECCS 1610333 and the Seagate Technology University Project. The authors thank Prof. Luis Mart\'{i}n-Moreno and Dr. Cristian Cirac\`{i} for their valuable suggestions, comments and inputs.


\end{document}